\documentclass[lettersize,journal]{IEEEtran}
\usepackage{amsmath,amsfonts,amssymb}
\usepackage{amsthm}
\usepackage{algorithmic}
\usepackage{multirow}
\usepackage{booktabs}
\usepackage[ruled,linesnumbered]{algorithm2e}
\usepackage{array}
\usepackage[caption=false,font=footnotesize,labelfont=rm,textfont=rm]{subfig}
\usepackage{textcomp}
\usepackage{stfloats}
\usepackage{url}
\usepackage{verbatim}
\usepackage{graphicx}
\usepackage[colorlinks,citecolor=blue,linkcolor=blue,urlcolor=blue]{hyperref}
\usepackage{cite}
\usepackage{xcolor}
\usepackage{tikz}
\usetikzlibrary{shapes.geometric,arrows.meta,positioning,calc,decorations.pathreplacing}
\newcommand{\bm}[1]{\boldsymbol{#1}}
\newtheorem{theorem}{Theorem}
\newtheorem{lemma}{Lemma}
\newtheorem{proposition}{Proposition}

\newtheorem{remark}{Remark}
\newtheorem{definition}{Definition}

\begin{document}

\title{Sparse Fluid Antenna Arrays: Continuous Position Design Beyond Classical DOF Limits}
\author{Tuo~Wu, Jie Tang,  Ye Tian, Cheng Zeng, Matthew  C. Valenti, \emph{Fellow}, \emph{IEEE},  Hing Cheung So, \emph{Fellow, IEEE}
\thanks{(\textit{Corresponding author: Jie Tang.})}
	\thanks{T. Wu, and J. Tang are with the School of Electronic and Information Engineering, South China University of Technology, Guangzhou 510640, China (E-mail: $\rm \{wutuo,eejtang\}@scut.edu.cn$). Y. Tian   with the Faculty of Electrical Engineering and Computer Science, Ningbo University, Ningbo 315211, China (E-mail: $\rm tianye1@nbu.edu.cn$). C. Zeng is with the School of Electronic and Optical Engineering, Nanjing University of Science and Technology, Nanjing 210094, China, 	(E-mail: $\rm czeng@njust.edu.cn$). M. C. Valenti is with the Lane Department of Computer Science and Electrical Engineering, West Virginia University, Morgantown, USA (E-mail: $\rm valenti@ieee.org$). H. C. So is with the Department of Electrical Engineering, City University of Hong Kong, Hong Kong, China (E-mail:   $\rm hcso@ee.cityu.edu.hk$). }}

	\markboth{IEEE TRANSACTIONS ON SIGNAL PROCESSING~Vol.~XX, No.~XX, XX~2026}{}%

\maketitle

\begin{abstract}
Classical sparse arrays, including the nested, coprime, and minimum redundancy arrays (MRAs), achieve $O(N^2)$ degrees-of-freedom (DOFs) by exploiting non-uniform spacing on the half-wavelength grid. Yet this grid constraint fundamentally caps the achievable DOFs and Cram\'{e}r--Rao bound (CRB) at levels dictated solely by the antenna count $N$, regardless of how large the deployment region is. Fluid antenna system (FAS), which continuously repositions a single physical element across a deployment region $[0, D]$, breaks this limit by freeing antenna positions from the discrete grid entirely. This paper establishes the theoretical foundations of sparse FAS design for direction-of-arrival (DOA) estimation and shows that continuous position freedom unlocks three compounding advantages over the classical designs. \emph{First}, we derive a universal dual DOF bound and prove that FAS-optimized positions can approach it, growing the DOF linearly with $D/\lambda$ , where $\lambda$ is the signal wavelength, rather than saturating at $O(N^2)$. \emph{Second}, the CRB scales as $O(1/D^{2L})$ for $L$ sources, a $(D/(N^2 d_0))^{2L}$ improvement over the best grid design, with $d_0 = \lambda/2$ and D-optimal positions admitting closed-form solution for single sources and efficient Frank-Wolfe algorithm for multiple sources. \emph{Third}, we propose a two-stage FAS-MUSIC approach that combines coarray MUSIC disambiguation with full-aperture local maximum likelihood (ML) refinement to track the CRB, overcoming the grating-lobe ambiguity inherent in large-aperture non-uniform arrays. Robustness to minimum spacing constraints, mutual coupling, and finite position accuracy is also analyzed. Extensive simulations show that FAS-MUSIC achieves $17.5\times$ lower root mean squared error (RMSE) than uniform linear array (ULA) MUSIC and that FAS with $4$ antennas outperforms MRA with $8$ antennas, gains that are unattainable by any grid-constrained design.
\end{abstract}

\begin{IEEEkeywords}
Fluid antenna system (FAS), sparse array design, difference coarray, degree-of-freedom, continuous position optimization, direction-of-arrival (DOA) estimation.
\end{IEEEkeywords}

\section{Introduction}\label{sec:intro}

\IEEEPARstart{S}{parse} array design has been a cornerstone of array signal processing for decades, aiming to maximize the spatial degrees-of-freedom (DOFs) achievable with a limited number of physical sensors~\cite{moffet_mra,pal_nested,vaidyanathan_coprime}. The key insight is that non-uniform inter-element spacing generates a virtual array, the \emph{difference coarray}, whose aperture far exceeds the physical array, enabling subspace methods such as MUSIC~\cite{stoica_music} to resolve far more sources than physical elements. Despite this ingenuity, all classical designs share an inescapable bottleneck: \emph{the achievable DOF is bounded by $O(N^2)$ and the Cram\'{e}r--Rao bound (CRB) scales no better than $O(1/(N^2 d_0)^{2L})$, where L denotes the source number, both quantities are dictated solely by the antenna count $N$ and the half-wavelength inter-element spacing $d_0= \lambda/2$ with $\lambda$ being the wavelength.} Enlarging the deployment region does not help, spreading grid elements further only creates coarray holes that eliminate the DOF advantage entirely.

Three families of sparse arrays have dominated the literature. Minimum redundancy arrays (MRAs)~\cite{moffet_mra} achieve the longest contiguous coarray for a given $N$ but require exhaustive combinatorial search, which becomes intractable for large $N$. Nested arrays~\cite{pal_nested} provide a systematic two-level design with $O(N^2)$ contiguous coarray elements and closed-form position expressions, at the cost of a dense inner subarray that is susceptible to mutual coupling. Coprime arrays~\cite{vaidyanathan_coprime} mitigate coupling by ensuring all spacings are at least one wavelength, but their coarray contains holes that reduce the usable DOFs. Subsequent works have proposed numerous variants such as generalized coprime arrays~\cite{qin_generalized_coprime}, super-nested arrays~\cite{liu_super_nested}, and augmented coprime arrays, each improving specific aspects but all sharing a fundamental constraint: \emph{antenna positions are restricted to integer multiples of the half-wavelength spacing $d_0 = \lambda/2$}.

Fluid antenna system (FAS)~\cite{wong_fas,wang2021fluid} fundamentally changes this landscape. FAS is a hardware-agnostic concept that treats the antenna as a reconfigurable physical-layer resource, emphasizing position and shape flexibility~\cite{new2025tut,new2025flar,wu2024flu,new_fas_mimo}. Practical FAS implementations span diverse technologies including movable elements~\cite{Zhu-Wong-2024}, liquid-based antennas~\cite{shen2024design}, and reconfigurable pixel antennas~\cite{zhang2024pixel,tong-2025pixel,tong2025design}. By enabling antenna positions to be continuously reconfigured within a deployment region $[0, D]$, FAS removes the half-wavelength grid constraint and opens a strictly larger design space for array optimization. This continuous position freedom raises a natural and important question: \emph{what are the fundamental DOF limits when antenna positions are freed from the grid, and how should positions be optimally designed to exploit this enlarged design space?}

To appreciate why this position freedom is transformative, consider a concrete scenario: $N = 6$ antennas deployed over a $D = 40d_0$ region. A grid-based array selects 6 positions from 81 available grid points $\{0, d_0, \ldots, 40d_0\}$. The fundamental bottleneck is combinatorial: 6 elements produce at most $\binom{6}{2} = 15$ distinct positive pairwise differences. For the difference coarray to be contiguous up to lag $M_c$, all integers $\{1, 2, \ldots, M_c\}$ must appear as differences, so the contiguous DOF is capped at $2 \times 15 + 1 = 31$, regardless of how large the deployment region is. In practice, the best-known MRA for $N = 6$ achieves DOF $= 19$ with aperture $20d_0$, only half the available deployment region. Spreading grid elements further across $40d_0$ would not help; the 15 available differences cannot fill the 40 required integer lags, and the coarray would develop large holes that eliminate the DOF entirely. The aperture of grid-based arrays is thus \emph{locked} to the antenna count, not to the deployment region.

FAS removes this lock. Because fluid antenna positions are real-valued and continuously adjustable, one can simply slide the antennas to the two endpoints of the deployment region, placing, say, three elements near $p = 0$ and three near $p = D = 40d_0$ \footnote{This endpoint clustering example is intended only to illustrate aperture-driven scaling; practical designs, accounting for spacing and coupling constraints, distribute elements across the aperture as developed later in the paper.}. This instantly achieves two compounding advantages. \emph{First}, the physical aperture becomes $D = 40d_0$ regardless of $N$, yielding angular resolution $\delta\theta \approx \lambda/D$, which is twice finer than the MRA's $\lambda/(20d_0)$ with the same 6 antennas. \emph{Second}, the position variance $\mu_2 = \frac{1}{N}\sum_n (p_n - \bar{p})^2$ directly controls the Cram\'{e}r--Rao bound (CRB) on direction-of-arrival (DOA) estimation accuracy. With all elements concentrated near the two endpoints, $\mu_2 = (40d_0)^2/4 = 400d_0^2$, compared to $\mu_2 \leq (20d_0)^2/4 = 100d_0^2$ for the MRA, a fourfold reduction in CRB using identical hardware.

The large aperture $D$ achievable by FAS does raise a known problem: the MUSIC spectrum of large-aperture non-uniform arrays can exhibit strong \emph{grating lobes}, creating spurious estimation peaks. Crucially, however, the continuous position freedom that creates this challenge also resolves it. Because FAS positions are real-valued and fully adjustable, they can be designed so that the resulting difference coarray contains a long \emph{contiguous} virtual uniform linear array (ULA) segment with half-wavelength spacing, which is by construction free of grating-lobe ambiguities, while the full physical aperture $D$ simultaneously drives the CRB to its minimum. The two quantities, DOF and accuracy, are thus not in tension for FAS; they are jointly optimized through position design.

Taken together, sparse FAS offers three compounding advantages over classical grid-based designs:
\begin{enumerate}
    \item \textbf{Aperture-scaled DOF}: the achievable DOF grows with $D/\lambda$ rather than $N^2$, strictly surpassing every grid design when $D$ is large;
    \item \textbf{Aperture-scaled CRB}: the CRB scales as $1/D^{2L}$ rather than $1/(N^2 d_0)^{2L}$, a CRB reduction that grows unboundedly with the deployment region;
    \item \textbf{Grating-lobe-free estimation by design}: the coarray structure ensures an unambiguous virtual aperture, so the full physical aperture is exploitable for precision, not just for coarse direction finding.
\end{enumerate}
A fourth practical benefit is tractability: unlike the NP-hard MRA combinatorial search on the grid, the FAS position design is a smooth continuous problem over $[0,D]^N$ that admits efficient polynomial-time optimization.

Position reconfigurability in FAS provides a new way to exploit the spatial domain, unlike conventional fixed-position antennas (FPAs), and has attracted rapidly expanding research interest. Fundamental channel characterization efforts have established spatial correlation models~\cite{FAS22,ramirez2024new,new2025channel} and analyzed diversity performance~\cite{new2024fluid,new2024an}. The combination of multiple-input multiple-output (MIMO) strategy and FAS leads to joint antenna position optimization and beamforming problems~\cite{cite10416896,LZhu23}. FAS has been applied to integrated sensing and communications (ISAC)~\cite{Zhou-isac2024}, non-orthogonal multiple access (NOMA)~\cite{TWuTCOMM26}, reconfigurable intelligent surface (RIS)-aided systems~\cite{TWuTVT25}, and secrecy communication~\cite{TuoW}. Fluid antenna multiple access schemes~\cite{H4_wong2022FAMA,H5_wong2023fast,H6_wong2023sFAMA} exploit position reconfigurability for interference-free multiuser communication.

While position optimization for FAS has been extensively studied in the communication community, DOA estimation, where the difference coarray structure plays a central role, remains largely unexplored. In particular, no prior work has addressed the question of how to design FAS positions to maximize the coarray DOF or minimize the CRB for DOA estimation, which is fundamentally different from communication-oriented design. Recent exceptions include~\cite{HXuTWC25} and~\cite{TWuJSTSP25}, which demonstrated the potential of sparse FAS for DOA estimation and proposed FAS as a new paradigm for array signal processing, respectively. However, neither provides the complete theoretical framework, including the dual DOF bound, D-optimal position design, and a two-stage estimation algorithm with near-CRB performance guarantees, that this paper establishes.

Several related concepts in the literature bear surface similarity to  sparse FAS design, but are fundamentally different in key respects. \emph{Synthetic aperture radar/sonar (SAR/SAS)} also creates a large virtual aperture by moving a sensor along a trajectory~\cite{cutrona_sar}, but the trajectory and its parameters are determined by the platform motion and cannot be jointly optimized with the signal processing; coherent processing requires phase stability over the entire trajectory, demanding stringent motion compensation. \emph{Sparse-recovery-based array interpolation}~\cite{abramovich_interpolation} estimates the virtual coarray signal via compressed sensing, but assumes a fixed array geometry and does not optimize the physical positions. \emph{Reconfigurable/switched arrays}~\cite{haupt_reconfigurable} select a subset of elements from a fixed dense array, operating in a discrete feasible set equivalent to $\mathcal{P}_{\mathrm{grid}}$, the NP-hard MRA problem remains. In contrast, FAS allows \emph{continuous} and \emph{simultaneous} optimization of all element positions within $[0,D]$, enabling tractable gradient-based design and runtime adaptability that none of these alternatives provide.

Against this backdrop, this paper establishes the theoretical foundations of sparse FAS design for DOA estimation. The main contributions are summarized as follows:

\begin{itemize}
    \item \textbf{\textit{Fundamental DOF Limit:}} We establish that the coarray DOF of any $N$-element array in $[0,D]$ is universally bounded by $\min(N^2-N+1,\, 2\lfloor D/d_0\rfloor+1)$, a dual bound combining a combinatorial term and a geometric term. For grid-based arrays, coarray holes caused by insufficient pairwise differences prevent reaching this bound when $D \gg O(N^2 d_0)$. By contrast, FAS continuously optimized positions can remain near this bound; we characterize the achievable FAS DOF gap via the Frank-Wolfe optimality condition and show it to be small in the practically relevant large-aperture regime.

    \item \textbf{\textit{Optimal Position Design:}} We formulate the FAS position optimization as a continuous problem over the difference coarray. For the single-source case, a closed-form globally optimal design is derived. For the multi-source scenarios, we develop a Frank-Wolfe algorithm over continuous measures with a certifiable optimality gap.

    \item \textbf{\textit{Practical Robust Design:}} We incorporate minimum spacing constraints ($|p_i - p_j| \geq d_{\min}$), mutual coupling effects, and finite position accuracy into the optimization framework, and quantify the DOF loss due to each imperfection.

    \item \textbf{\textit{Two-Stage FAS-MUSIC Algorithm:}} We propose a two-stage DOA estimator tailored for sparse FAS. Stage~1 exploits the difference coarray via spatial smoothing MUSIC to obtain unambiguous coarse estimates, resolving the grating-lobe ambiguity inherent in large-aperture non-uniform arrays. Stage~2 refines these estimates via local maximum likelihood (ML) estimation using the full FAS array, attaining near-CRB accuracy. An adaptive variant that jointly optimizes positions and estimates is also developed.

    \item \textbf{\textit{Comprehensive Evaluation:}} Extensive simulations compare sparse FAS against ULA, nested, coprime, and MRA designs across DOF, CRB, root mean square error (RMSE), and super-resolution metrics, demonstrating consistent and significant advantages of both the FAS design and the proposed estimator.
\end{itemize}

The remainder is organized as follows. Section~\ref{sec:model} reviews the signal model and coarray framework. Section~\ref{sec:dof} derives the DOF limits. Section~\ref{sec:design} develops the optimal position design algorithms and the two-stage FAS-MUSIC estimator. Section~\ref{sec:robust} addresses practical constraints. Section~\ref{sec:sim} provides simulation results, and finally, conclusions are drawn in Section~\ref{sec:conc}.

\emph{Notations}: Standard notation is used throughout. $\mathbf{A}$, $\mathbf{a}$, and $a$ denote matrix, vector, and scalar, respectively. $(\cdot)^T$, $(\cdot)^H$, $(\cdot)^{-1}$ denote transpose, conjugate transpose, and inverse. $\odot$ denotes the Khatri-Rao (column-wise Kronecker) product; the same symbol is used for the Hadamard (element-wise) product when both operands share the same dimensions. $\mathrm{vec}(\cdot)$ vectorizes a matrix.

\section{Signal Model and Coarray Framework}\label{sec:model}

This section builds the theoretical framework for FAS-based DOA estimation. We first review classical sparse arrays and their grid-imposed constraints, then introduce FAS and explain how its continuous position freedom overcomes those limitations, and finally present the formal FAS signal model and coarray formulation used throughout the paper.

\subsection{Classical Sparse Arrays and their Limitations}

Consider $N$ antenna elements with positions $\{p_n\}_{n=1}^N$ within a deployment region of aperture $D$, receiving $L$ narrowband far-field sources at angle vector $\bm{\theta}$. The received signal follows the standard model $\mathbf{x}(t) = \mathbf{A}(\bm{\theta})\mathbf{s}(t) + \mathbf{n}(t)$, with spatial covariance $\mathbf{R} = \mathbf{A}\mathbf{R}_s\mathbf{A}^H + \sigma^2\mathbf{I}$. A key property of $\mathbf{R}$ is that its vectorization $\mathbf{z} = \mathrm{vec}(\mathbf{R})$ is equivalent to measurements of a virtual array at the \emph{difference coarray} positions $\mathbb{D} = \{p_i - p_j : 1 \leq i,j \leq N\}$, and the number of resolvable sources, i.e., DOF,  equals the cardinality of the longest contiguous segment of $\mathbb{D}$ on the Nyquist grid.

Classical sparse arrays, nested arrays~\cite{pal_nested}, coprime arrays~\cite{vaidyanathan_coprime}, and MRAs~\cite{moffet_mra}, exploit this coarray structure by carefully choosing positions on the half-wavelength grid $\{0, d_0, 2d_0, \ldots\}$:
\begin{itemize}
    \item \emph{Nested Array} achieves $O(N^2)$ contiguous coarray elements with a closed-form two-level design.
    \item \emph{Coprime Array} reduces mutual coupling but introduces coarray holes that lower the usable DOF.
    \item \emph{MRA} maximizes the contiguous DOF for a given $N$, but finding optimal positions is NP-hard.
\end{itemize}
Fig.~\ref{fig:coarray_vis} illustrates these designs for $N = 6$.

\begin{figure}[!t]
    \centering
    \includegraphics[width=0.95\columnwidth]{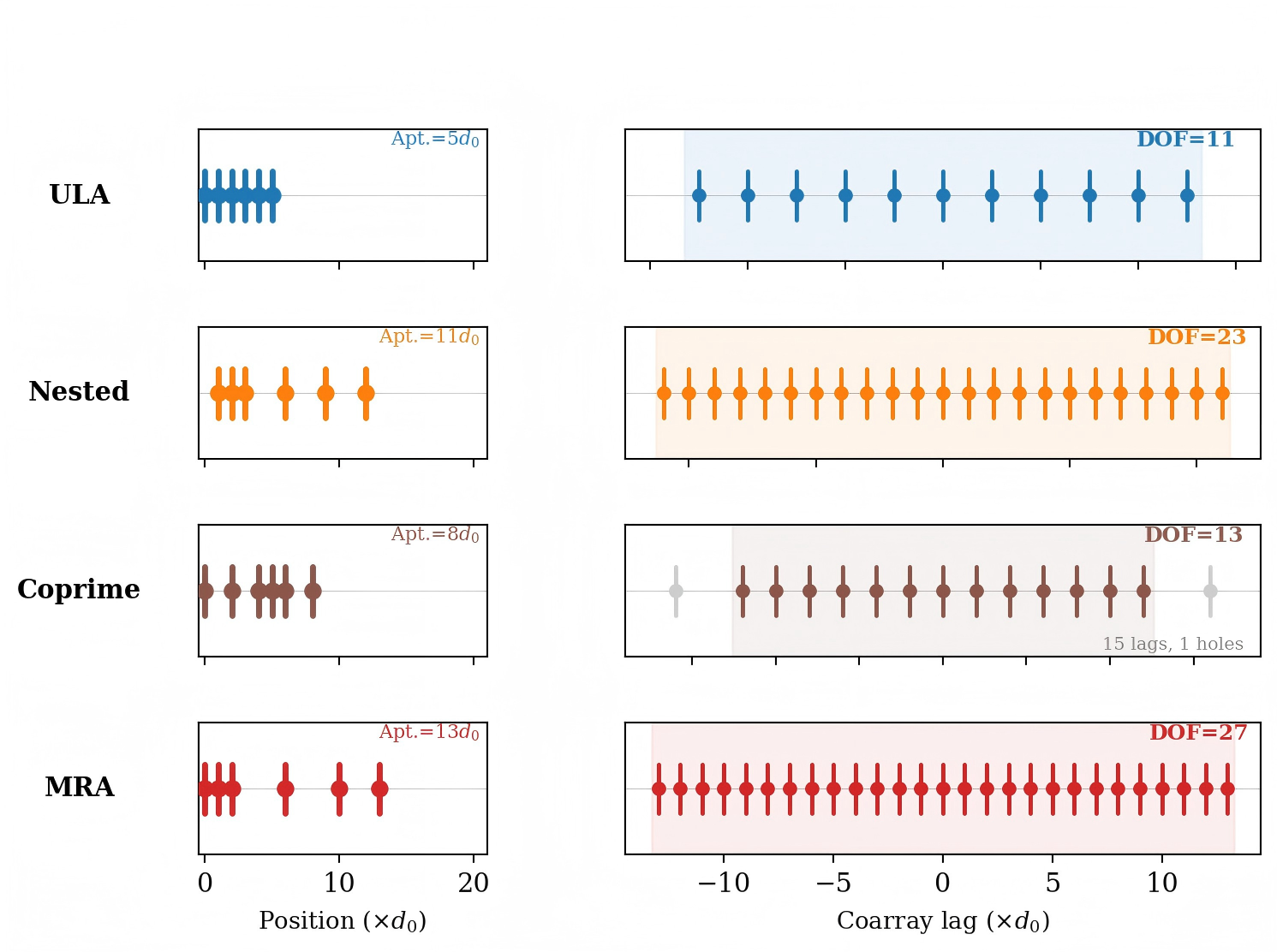}
    \caption{Physical array positions (left) and their difference coarrays (right) for four classical array designs with $N=6$. Colored markers denote contiguous coarray elements; gray markers indicate non-contiguous lags (holes). The contiguous DOF and total number of lags are annotated for each design.}
    \label{fig:coarray_vis}
\end{figure}

Despite their sophistication, all three designs share a fundamental bottleneck: \emph{positions are restricted to integer multiples of $d_0 = \lambda/2$}. This grid constraint creates two coupled limitations. First, the achievable aperture is locked to $O(N^2 d_0)$ regardless of the physical deployment region $D$; for $N = 6$, the best MRA achieves only $20d_0$ even if $D = 40d_0$ or larger. Second, with only $N(N-1)/2$ distinct positive pairwise differences available, the contiguous DOF is bounded by $N^2-N+1$, scaling with $N^2$, not with $D/d_0$. Extending the physical aperture by spreading grid elements further actually \emph{degrades} performance: the sparse differences cannot fill the required integer lags, creating coarray holes and eliminating the DOF advantage entirely. The aperture of grid-based arrays is thus fundamentally coupled to the antenna count, not to the deployment region.

\subsection{FAS: Overcoming Grid Limitations}\label{subsec:why_fas}

FAS~\cite{wong_fas,new_fas_mimo} removes the grid restriction by allowing antenna positions to be continuously reconfigured within $[0, D]$. This seemingly simple relaxation has profound consequences.

\subsubsection{Aperture and DOF Advantages}
With real-valued positions, FAS can place antennas at any location in $[0, D]$. By positioning elements near the two endpoints ($p_1 \approx 0$, $p_N \approx D$), the physical aperture immediately reaches $D$, far beyond the $O(N^2 d_0)$ ceiling of any grid-based design. For $N = 6$ and $D = 40d_0$, FAS achieves aperture $40d_0$ versus\ the MRA's $20d_0$, giving twofold finer angular resolution $\delta\theta \approx \lambda/D$. Furthermore, the position variance $\mu_2 = \frac{1}{N}\sum_n(p_n - \bar{p})^2$ directly controls the CRB; with elements clustered near the two endpoints, $\mu_2 = (40d_0)^2/4 = 400d_0^2$ versus $\mu_2 \leq 100d_0^2$ for the MRA, a fourfold reduction in CRB with identical hardware.

Moreover, the DOF ceiling of $N^2-N+1$ for grid arrays is no longer binding: FAS positions are flexible and can be designed so that the difference coarray $\mathbb{D}$ contains a long contiguous virtual ULA segment of length $2\lfloor D/d_0 \rfloor + 1$, growing linearly with the deployment region rather than with $N^2$.

\subsubsection{Grating Lobes the Coarray Solution}
The large aperture of FAS brings a well-known challenge: the MUSIC spectrum of large-aperture non-uniform arrays exhibits strong \emph{grating lobes}, making DOA estimation unreliable with standard algorithms. Critically, FAS's continuous position freedom also provides the solution: positions are designed so that the difference coarray $\mathbb{D}$ includes a \emph{contiguous} virtual ULA segment $\mathbb{D}_c$ with half-wavelength spacing. This segment is by construction free of grating-lobe ambiguities, while the full physical aperture $D$ governs the CRB. The dual objective of contiguous coarray coverage and large physical aperture is simultaneously achievable through continuous position optimization, a combination that is impossible for grid-based designs.

\subsubsection{Tractable Position Design}
On the grid, finding optimal positions for MRA is NP-hard. With continuous positions over $[0, D]^N$, the position design becomes a smooth problem solvable in polynomial time via gradient-based optimization. 

\subsection{FAS Signal Model}\label{subsec:fas_model}

Having identified the FAS advantages, we now formalize the signal model. A single fluid antenna element visits $N$ pre-designed locations $\{p_n\}_{n=1}^N \subset [0, D]$, following a sequential protocol: the antenna is first configured at $p_1$, collects $N_p$ temporal snapshots, then repositions to $p_2$ for another $N_p$ snapshots, and so on until $p_N$. Sources are assumed stationary (constant DOA vector $\bm{\theta}$) throughout the traversal, requiring total observation time $T_\text{obs} = N N_p T_s \ll T_c$, where $T_c$ is the signal coherence time. As a concrete example, for $N = 6$ and $D = 40d_0$, the optimized positions are $\{0,\, 3d_0,\, 8d_0,\, 32d_0,\, 37d_0,\, 40d_0\}$; with $N_p = 500$ snapshots per position and $T_s = 1\,\mu\text{s}$, the total observation time is $T_\text{obs} = 3\,\text{ms}$, well within typical quasi-static coherence times of tens to hundreds of milliseconds.

Stacking measurements across all $N$ positions yields the data matrix $\mathbf{X} \in \mathbb{C}^{N \times N_p}$, where row $n$ contains the $N_p$ snapshots at position $p_n$. Under the stationarity assumption, this is equivalent to an $N$-element virtual array receiving the signal:
\begin{equation}\label{eq:signal}
    \mathbf{x}(t) = \mathbf{A}(\bm{\theta}) \mathbf{s}(t) + \mathbf{n}(t),
\end{equation}
where $\mathbf{A}(\bm{\theta}) = [\mathbf{a}(\theta_1), \ldots, \mathbf{a}(\theta_L)] \in \mathbb{C}^{N \times L}$ is the steering matrix with
\begin{equation}\label{eq:steering}
    [\mathbf{a}(\theta_\ell)]_n = e^{j \frac{2\pi}{\lambda} p_n \sin\theta_\ell}, \quad n = 1, \ldots, N,
\end{equation}
$\mathbf{s}(t) \in \mathbb{C}^L$ is the source signal vector, and $\mathbf{n}(t) \sim \mathcal{CN}(\mathbf{0}, \sigma^2\mathbf{I}_N)$. Assuming independent sources, the spatial covariance is:
\begin{equation}\label{eq:covariance}
    \mathbf{R} = \mathbb{E}\{\mathbf{x}(t)\mathbf{x}^H(t)\} = \mathbf{A}\mathbf{R}_s\mathbf{A}^H + \sigma^2\mathbf{I},
\end{equation}
where $\mathbf{R}_s = \mathrm{diag}\{P_1, \ldots, P_L\}$, estimated by $\hat{\mathbf{R}} = \mathbf{X}\mathbf{X}^H/N_p$. Vectorizing $\mathbf{R}$ yields the coarray observation vector:
\begin{equation}\label{eq:vectorize}
    \mathbf{z} = \mathrm{vec}(\mathbf{R}) = (\mathbf{A}^* \odot \mathbf{A})\mathbf{P}_s + \sigma^2\mathrm{vec}(\mathbf{I}),
\end{equation}
where $\mathbf{P}_s = [P_1, \ldots, P_L]^T$. The effective virtual array positions form the difference coarray:
\begin{equation}\label{eq:diff_coarray}
    \mathbb{D} = \{p_i - p_j : 1 \leq i,j \leq N\},
\end{equation}
whose longest contiguous segment on the Nyquist grid determines the DOF. For FAS-optimized positions in $[0, D]$, this contiguous segment can approach length $2\lfloor D/d_0 \rfloor + 1$, growing linearly with the deployment region.

\section{Fundamental Limits of Sparse FAS}\label{sec:dof}

This section establishes the fundamental performance limits of sparse FAS and rigorously compares it with classical grid-based designs. We identify \emph{three layers of advantage} that FAS possesses: (i)~a deployment-region-dependent DOF that decouples aperture from antenna count; (ii)~a strictly larger feasible design space yielding more Fisher information; and (iii)~a continuous moment space enabling optimal CRB for any source configuration.

\subsection{Preliminaries and Notation}

Let $d_0 = \lambda/2$ denote the Nyquist spacing. Define the normalized deployment aperture $M = \lfloor D / d_0 \rfloor$. For any $N$-element array with positions $\{p_n\}_{n=1}^N$, the \emph{difference coarray} is $\mathbb{D} = \{p_i - p_j : 1 \leq i,j \leq N\}$. We use the standard DOF definition:

\begin{definition}[Coarray DOF]\label{def:dof}
The coarray DOF of an array with positions $\{p_n\}$ is the cardinality of the longest contiguous segment of $\mathbb{D}$ on the Nyquist grid:
\begin{equation}\label{eq:dof_def}
    \mathrm{DOF}(\{p_n\}) = \max\left\{2M_c + 1 : \{-M_c d_0, \ldots, M_c d_0\} \subseteq \mathbb{D}\right\}.
\end{equation}
\end{definition}

We distinguish two feasible sets for position design:
\begin{align}
    \mathcal{P}_{\mathrm{grid}}(N, D) &= \left\{(p_1, \ldots, p_N) : p_n \in \{0, d_0, 2d_0, \ldots\} \cap [0, D]\right\}, \label{eq:P_grid} \\
    \mathcal{P}_{\mathrm{FAS}}(N, D) &= \left\{(p_1, \ldots, p_N) : p_n \in [0, D]\right\}. \label{eq:P_fas}
\end{align}
Clearly, $\mathcal{P}_{\mathrm{grid}} \subset \mathcal{P}_{\mathrm{FAS}}$ (strict inclusion for $D > d_0$).

\subsection{Layer I: Coarray DOF Bounds}

We first establish the coarray DOF bounds, showing that FAS and classical arrays share a combinatorial bound but differ in a geometric bound controlled by the deployment region.

\begin{theorem}[Dual DOF Bounds]\label{thm:dual_bounds}
For any $N$-element array with positions in $[0, D]$, the coarray DOF satisfies
\begin{equation}\label{eq:dual_bounds}
    \mathrm{DOF} \leq \min\!\Big(\underbrace{N^2 - N + 1}_{\text{combinatorial}},\; \underbrace{2M + 1}_{\text{geometric}}\Big),
\end{equation}
where $M = \lfloor D/d_0 \rfloor$. Both bounds hold for FAS and classical arrays alike.
\end{theorem}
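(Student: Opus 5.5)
The two bounds can be proved separately, each by a simple counting argument on the set in Definition~\ref{def:dof}. Fix positions $\{p_n\}_{n=1}^N\subset[0,D]$ and suppose $\{-M_c d_0,\ldots,M_c d_0\}\subseteq\mathbb{D}$, so that $\mathrm{DOF}=2M_c+1$. The goal is to show that $2M_c+1\le N^2-N+1$ and $2M_c+1\le 2M+1$. Since neither argument uses the grid structure of $\mathcal{P}_{\mathrm{grid}}$, and $\mathcal{P}_{\mathrm{grid}}\subset\mathcal{P}_{\mathrm{FAS}}$, both bounds apply to classical arrays and FAS alike.

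\emph{Combinatorial bound.} Count the elements of $\mathbb{D}=\{p_i-p_j\}$ as a multiset. The $N$ diagonal pairs $i=j$ all give the single value $0$. The $N(N-1)$ off-diagonal ordered pairs give at most $N(N-1)$ further values, and these come in $\pm$ pairs, so there are at most $N(N-1)/2$ distinct strictly positive differences. The contiguous segment needs the $M_c$ distinct positive lags $d_0,2d_0,\ldots,M_c d_0$, and each must be realized by some off-diagonal pair with $p_i>p_j$. This forces $M_c\le N(N-1)/2$, and therefore
\[
2M_c+1\le N^2-N+1 .
\]
One point needs care: a positive lag can never come from a diagonal pair, because those pairs only produce $0$.

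\emph{Geometric bound.} Every element of $\mathbb{D}$ satisfies $|p_i-p_j|\le \max_n p_n-\min_n p_n\le D$. So the lag $M_c d_0\in\mathbb{D}$ implies $M_c d_0\le D$, hence $M_c\le D/d_0$. Because $M_c$ is an integer, this gives $M_c\le\lfloor D/d_0\rfloor=M$, and so $2M_c+1\le 2M+1$.

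Taking the maximum over admissible $M_c$ and combining the two inequalities yields \eqref{eq:dual_bounds}. I do not expect a real obstacle here. The only subtleties are the integer-floor step in the geometric bound and the fact that coincident or real-valued (off-grid) differences can only reduce the count. Off-grid differences simply fail to hit grid lags, so the distinct-difference count remains an upper bound for FAS. If a degenerate configuration has no nonzero lag at all, then $M_c=0$ and both bounds hold trivially.
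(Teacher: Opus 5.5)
Your proof is correct and follows essentially the same route as the paper: the combinatorial bound rests on there being at most $N(N-1)/2$ distinct positive differences (the paper states this as $|\mathbb{D}|\le N(N-1)+1$, while you count positive lags and use the symmetry of the contiguous segment), and the geometric bound rests on $|p_i-p_j|\le D$. Your integer-floor step $M_c\le\lfloor D/d_0\rfloor$ is the same observation the paper makes by counting the grid points in $[-D,D]$.
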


\begin{proof}
\emph{Combinatorial bound.}
The set of \emph{nonzero} pairwise differences has at most $N(N-1)$ elements (ordered pairs $(i,j)$ with $i \neq j$). By the symmetry $p_i - p_j = -(p_j - p_i)$, the number of \emph{distinct unsigned} differences is at most $N(N-1)/2$. Including zero, $|\mathbb{D}| \leq N(N-1) + 1$. Since the contiguous coarray is a subset of $\mathbb{D}$, we have $\mathrm{DOF} \leq N^2 - N + 1$.

\emph{Geometric bound.}
Since all positions lie in $[0, D]$, all differences satisfy $|p_i - p_j| \leq D$. The coarray is contained in $[-D, D]$, which intersects the Nyquist grid $\{m d_0 : m \in \mathbb{Z}\}$ in at most $2\lfloor D/d_0 \rfloor + 1 = 2M + 1$ points.
\end{proof}

The dual-bound structure reveals the key difference between FAS and classical arrays:

\begin{remark}[Bound Regimes]\label{rem:regimes}
\emph{(i) Combinatorial-limited regime ($2M + 1 > N^2 - N + 1$, i.e., $D > N(N-1)d_0/2$):} The DOF is limited by the number of antennas, not the deployment region. Both FAS and classical arrays face the same $O(N^2)$ ceiling. \emph{(ii) Geometry-limited regime ($2M + 1 \leq N^2 - N + 1$, i.e., $D \leq N(N-1)d_0/2$):} The DOF is limited by the deployment region. This regime is rare for sparse arrays but relevant for compact deployments.
\end{remark}

\begin{remark}[FAS DOF Achievability versus Grid Arrays]\label{rem:achievability}
The dual bound~\eqref{eq:dual_bounds} is universal, which applies equally to FAS and to grid-constrained arrays. The key distinction lies in \emph{achievability}. For grid arrays, achieving DOF $= N^2-N+1$ requires the positions to form a perfect difference set (a perfect Golomb ruler), which exists only for $N \leq 4$ (see Theorem~\ref{thm:grid_aperture}); for $N > 4$, the achievable contiguous DOF is strictly below the combinatorial bound. Moreover, when the deployment region is large ($D \gg N(N-1)d_0/2$), spreading grid elements over $[0,D]$ does not help: the $N(N-1)/2$ available differences cannot fill all integer lags in $\{1, \ldots, \lfloor D/d_0 \rfloor\}$, creating coarray holes that reduce the usable contiguous DOF.

For FAS, the situation is fundamentally different. By choosing positions at integer multiples of $d_0$ (i.e., selecting the best grid configuration), FAS subsumes all grid designs. Moreover, by allowing continuous positions, FAS gains access to designs whose difference sets, after appropriate rounding or interpolation, cover more coarray lags than any grid design for the same $N$ and $D$. In practice, the Frank-Wolfe position optimization (Algorithm~\ref{alg:frank_wolfe}) maximizes the Fisher information and implicitly favors position distributions that maximize coarray coverage; the Kiefer-Wolfowitz optimality gap quantifies how close the achieved design is to the universal bound.
\end{remark}

\begin{theorem}[Grid Array DOF--Aperture Coupling]\label{thm:grid_aperture}
For any $N$-element array on the half-wavelength grid with contiguous coarray $\{-M_c d_0, \ldots, M_c d_0\} \subseteq \mathbb{D}$, the number of contiguous positive lags satisfies
\begin{equation}\label{eq:grid_aperture}
    M_c \leq \frac{N(N-1)}{2},
\end{equation}
requiring physical aperture $D_{\mathrm{grid}} \geq M_c d_0$. Equality $M_c = N(N-1)/2$ holds if and only if the positions form a perfect difference set, which exists only for $N \leq 4$. Moreover, classical sparse array designs (nested, coprime, MRA) all have apertures scaling as $O(N^2 d_0)$, fundamentally coupling resolution to antenna count.
\end{theorem}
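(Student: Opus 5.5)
I would split Theorem~\ref{thm:grid_aperture} into three claims: the counting bound $M_c \le N(N-1)/2$ with the aperture requirement, the equality characterization, and the $O(N^2 d_0)$ aperture scaling of the classical designs.

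\emph{Counting bound and aperture.} Write $p_n = k_n d_0$ with integers $k_n$, and sort them as $0 \le k_1 < \cdots < k_N$; repeated positions only reduce the number of differences, so we may assume they are distinct. Suppose the contiguous segment $\{-M_c d_0,\ldots,M_c d_0\}$ lies in $\mathbb{D}$. Then each positive lag $m \in \{1,\ldots,M_c\}$ equals $k_i - k_j$ for at least one unordered pair $i>j$. This gives a map from lags to pairs, and it is injective because one pair yields exactly one positive difference. There are only $\binom{N}{2}$ pairs, so $M_c \le N(N-1)/2$, which sharpens the combinatorial part of Theorem~\ref{thm:dual_bounds}. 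For the aperture claim, lag $M_c$ must be realized by some pair, so $M_c d_0 \le k_N d_0 - k_1 d_0 \le D_{\mathrm{grid}}$.

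\emph{Equality characterization.} If $M_c = N(N-1)/2$, the injection above is a bijection. Hence every positive difference is distinct and lies in $\{1,\ldots,M_c\}$. The largest difference is $k_N - k_1$, and since $M_c$ itself must be realized, $k_N - k_1 = M_c$. So $\{k_n - k_1\}$ is a perfect Golomb ruler of length $\binom{N}{2}$, which is what the statement means by a perfect difference set. The converse is immediate: if the differences are all distinct and exhaust $\{1,\ldots,\binom{N}{2}\}$, the coarray is contiguous up to $\binom{N}{2}$. For existence, $\{0\},\{0,1\},\{0,1,3\},\{0,1,4,6\}$ give $N \le 4$.

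\emph{Nonexistence for $N \ge 5$ (the main obstacle).} I would invoke the classical theorem of Golomb (as presented by Bloom and Golomb), which states that no perfect Golomb ruler exists for $N \ge 5$. The authors would likely just cite this result. To make it self-contained, I would first try the standard argument. Let the ruler be $0=a_1<\cdots<a_N=T$ with $T=\binom{N}{2}$. The lag $T-1$ forces $a_2=1$ or $a_{N-1}=T-1$, and by reflection we take $a_2=1$. Then the lag $T-2$ forces one further endpoint-adjacent element. Next, consider lag $2$. It must appear exactly once, and it is already obstructed by gaps of $1$ near both ends, so it has to be placed in the interior. Pushing this case analysis forward yields a contradiction in general. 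This step is delicate, so I would rely on the citation and sketch only the first cases.

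\emph{Classical designs.} I would then list the explicit apertures. A two-level nested array with $N_1+N_2=N$ has aperture $(N_2(N_1+1)-1)d_0$, which is at most about $N^2 d_0/4$. A coprime array with $M,N'$ sensors has aperture $O(MN')d_0 = O(N^2)d_0$. For an MRA, the counting bound already gives $D_{\mathrm{grid}} \le N(N-1)d_0/2$ to support a contiguous coarray. Together these show the aperture is $O(N^2 d_0)$ in all three cases.
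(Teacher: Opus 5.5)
Your proposal is correct and follows the paper's own proof. Both use the same count of at most $\binom{N}{2}$ positive differences to get $M_c\le N(N-1)/2$, identify equality with a perfect Golomb ruler whose nonexistence for $N\ge 5$ is quoted from the literature, and list the $O(N^2 d_0)$ apertures of the nested, coprime and MRA designs; your injective lag-to-pair map and the inequality $M_c d_0\le (k_N-k_1)d_0$ simply make explicit what the paper leaves implicit.

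There is one small slip in your optional self-contained sketch, which you do not need since you rely on the citation exactly as the paper does. Once $a_2=1$ and $a_{N-1}=T-2$, lag $2$ is already realized by $a_N-a_{N-1}$, so your claim that it "has to be placed in the interior" is false. The productive next steps are lag $T-4$, which forces $a_3=4$ and already repeats lag $4$ when $N=5$, and then lag $T-5$, which has no realization for $N\ge 6$.
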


\begin{proof}
An $N$-element array has at most $N(N-1)/2$ distinct positive pairwise differences. For the coarray to be contiguous up to lag $M_c$, every integer in $\{1, 2, \ldots, M_c\}$ must appear as a positive difference. Since there are only $N(N-1)/2$ positive differences available, $M_c \leq N(N-1)/2$. Achieving equality requires all positive differences to be distinct and to fill $\{1, 2, \ldots, N(N-1)/2\}$ exactly, a perfect difference set (equivalently, a perfect Golomb ruler), which is known to exist only for $N \leq 4$. For nested arrays, the aperture is $N_2(N_1+1)d_0 = O(N^2 d_0)$; coprime arrays have aperture $(MN'-1)d_0 = O(N^2 d_0)$; MRA apertures also scale as $O(N^2 d_0)$.
\end{proof}

While Theorem~\ref{thm:grid_aperture} shows that grid-constrained arrays have DOF and aperture fundamentally coupled to $N$, fluid antenna arrays face no such restriction:

\begin{theorem}[FAS Deployment Region Independence]\label{thm:fas_region}
For FAS arrays, the deployment region $D$ is an independent design parameter, not coupled to $N$. Specifically, for any $D > 0$ and $N \geq 2$:
\noindent (i)~The physical aperture is exactly $D$, achieved by placing at least one element at each endpoint.
(ii)~When $D \gg N^2 d_0$, FAS operates in a \emph{large-aperture regime} where the angular resolution
\begin{equation}\label{eq:resolution}
    \delta\theta_{\mathrm{FAS}} \approx \frac{\lambda}{D}
\end{equation}
is much finer than that of classical sparse arrays whose aperture is $O(N^2 d_0)$.
\end{theorem}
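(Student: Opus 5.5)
The plan is to prove the two parts separately. Both rest only on the definition of $\mathcal{P}_{\mathrm{FAS}}(N,D)$ in \eqref{eq:P_fas}, on Theorem~\ref{thm:grid_aperture}, and on a standard beamwidth argument.

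For part (i), I will define the physical aperture as $\max_n p_n - \min_n p_n$. Since every position lies in $[0,D]$, this quantity is at most $D$. Conversely, because $N \ge 2$ and the feasible set $[0,D]^N$ is a full product of closed intervals with no grid or coupling restriction, we may choose $p_1 = 0$ and $p_N = D$, and place the remaining $N-2$ elements anywhere in $[0,D]$. The aperture is then exactly $D$, so the supremum is attained.

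The point to stress is that this construction works for \emph{every} real $D>0$ and every $N\ge 2$. No relation between $D$ and $N$ is needed, and this is the precise sense in which $D$ is an independent design parameter. I will contrast this with Theorem~\ref{thm:grid_aperture}. There, a grid array with contiguous coarray up to lag $M_c$ needs $M_c \le N(N-1)/2$. Classical hole-free designs therefore have aperture $O(N^2 d_0)$, and any grid array spread beyond that develops coarray holes. The aperture of a grid design is thus tied to $N$, whereas the aperture of a FAS design is tied only to $D$.

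For part (ii), I will use the classical Rayleigh/beamwidth approximation for a linear array of aperture $A$. Write $u=\sin\theta$. The array factor $\sum_n e^{j\frac{2\pi}{\lambda}p_n u}$ has a mainlobe whose width in $u$ is inversely proportional to the spread of the positions. For endpoints at $0$ and $A$, the first null occurs at $\Delta u \approx \lambda/A$. Near broadside $d\theta \approx du$, so $\delta\theta \approx \lambda/A$. The same scaling can be read off the CRB dependence on the position variance $\mu_2$, which is of order $A^2$.

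I will then substitute the two apertures:
\begin{itemize}
\item for FAS, $A = D$ from part (i), which gives \eqref{eq:resolution};
\item for classical sparse arrays, $A = c N^2 d_0$ for a constant $c$, by Theorem~\ref{thm:grid_aperture}.
\end{itemize}
The ratio of the two resolutions is $\delta\theta_{\mathrm{FAS}}/\delta\theta_{\mathrm{grid}} \approx cN^2 d_0/D$, which tends to $0$ when $D \gg N^2 d_0$. This is the claimed ``much finer'' resolution.

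I expect the main obstacle to be making ``angular resolution'' precise for a \emph{non-uniform} array. For sparse or endpoint-clustered positions, the mainlobe width is still governed by the aperture, but grating lobes or high sidelobes can appear. My plan is to state resolution strictly as the mainlobe width (or, equivalently, the curvature of the beampattern at its peak, proportional to $\mu_2$). I will note explicitly that ambiguity is a separate issue, handled later by the coarray stage.

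I will also keep the result as an order-of-magnitude ($\approx$) statement, consistent with how \eqref{eq:resolution} is phrased, rather than claiming a sharp constant. The constant depends on the exact position distribution, for example through the factor relating $\mu_2$ to $D^2$.
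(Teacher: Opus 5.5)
Your proposal is correct and follows essentially the same route as the paper. You prove part (i) by placing elements at $p_1=0$ and $p_N=D$, and part (ii) by the Rayleigh approximation $\delta\theta\approx\lambda/A$ for aperture $A$, combined with the $O(N^2 d_0)$ classical aperture from Theorem~\ref{thm:grid_aperture}; your added caveats (the trivial bound $\max_n p_n-\min_n p_n\le D$, treating resolution as mainlobe width separate from grating-lobe ambiguity, and keeping the constant loose) are sound refinements that the paper leaves implicit.
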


\begin{proof}
Part~(i) is immediate by substituting $p_1 = 0$ and $p_N = D$. For part~(ii), the Rayleigh resolution of an array with aperture $D_{\mathrm{eff}}$ is $\delta\theta \approx \lambda/D_{\mathrm{eff}}$. By Theorem~\ref{thm:grid_aperture}, classical sparse arrays have apertures $D_c = O(N^2 d_0) = O(N^2\lambda/2)$, yielding $\delta\theta_{\mathrm{classical}} = O(2/N^2)$. When $D \gg N^2 d_0$, the FAS resolution $\lambda/D$ is significantly finer.
\end{proof}

Fig.~\ref{fig:dof_vs_N} compares the coarray DOF across all array types as a function of $N$. The FAS upper bound $\min(N^2-N+1,\, 2\lfloor D/d_0 \rfloor + 1)$ with $D = 40d_0$ consistently exceeds all classical designs, confirming the DOF advantage predicted by Theorems~\ref{thm:dual_bounds}--\ref{thm:fas_region}.

\begin{figure}[!t]
    \centering
    \includegraphics[width=0.95\columnwidth]{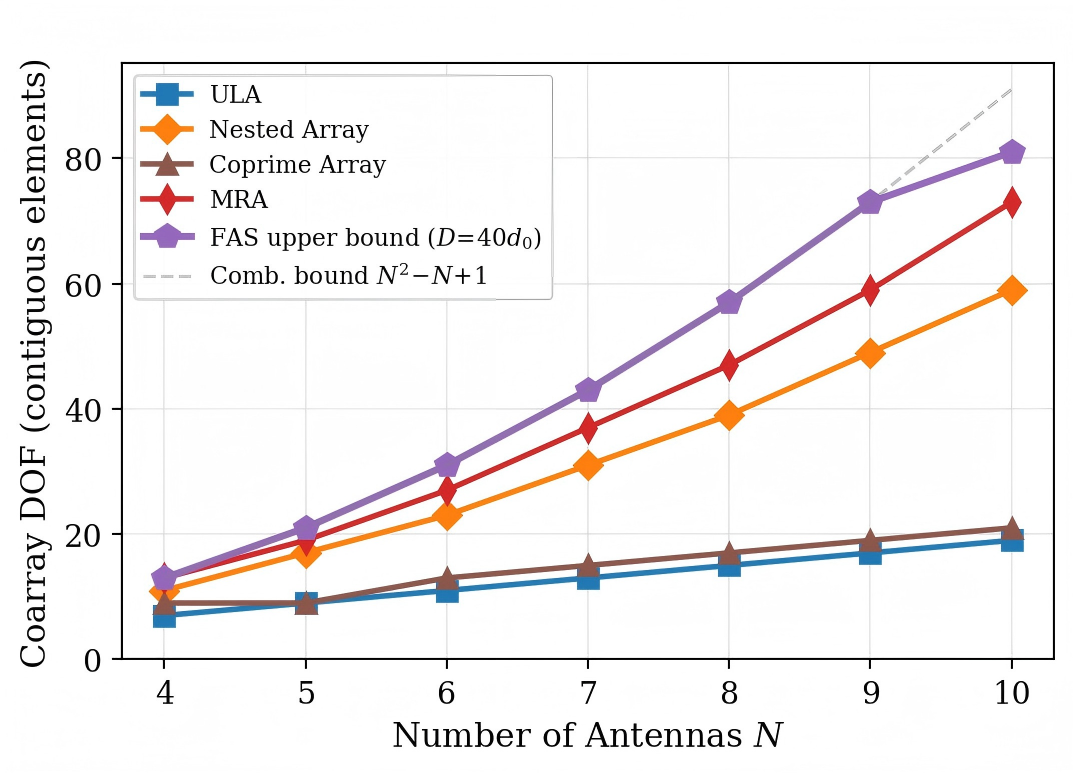}
    \caption{Coarray DOF versus antenna number $N$ for ULA, nested, coprime, MRA, and FAS upper bound with $D = 40d_0$. The gray dashed line shows the combinatorial bound $N^2 - N + 1$.}
    \label{fig:dof_vs_N}
\end{figure}

\subsection{Layer II: Fisher Information Dominance}

The coarray DOF analyzed above determines the maximum number of sources that can be \emph{resolved}. However, for a fixed number of sources $L < \mathrm{DOF}$, the estimation \emph{accuracy} is governed by the CRB, which depends on the Fisher information matrix (FIM). We now show that FAS achieves a strictly better CRB than any grid-constrained design, providing a second layer of advantage beyond DOF.

\subsubsection{FIM Structure}
We begin by recalling the FIM structure for DOA estimation. For $L$ uncorrelated sources with powers $\{P_\ell\}$ and DOA vector $\bm{\theta} = [\theta_1, \ldots, \theta_L]^T$, the FIM for DOA estimation from $N_p$ snapshots is
\begin{equation}\label{eq:fim}
    \mathbf{F}(\bm{\theta}; \{p_n\}) = \frac{2N_p}{\sigma^2} \mathrm{Re}\left\{\mathbf{D}^H \bm{\Pi}_{\mathbf{A}}^\perp \mathbf{D} \odot \mathbf{R}_s^T\right\},
\end{equation}
where $\mathbf{D} = [\dot{\mathbf{a}}_1, \ldots, \dot{\mathbf{a}}_L]$ with $\dot{\mathbf{a}}_\ell = \partial\mathbf{a}(\theta_\ell)/\partial\theta_\ell$, $\bm{\Pi}_{\mathbf{A}}^\perp = \mathbf{I} - \mathbf{A}(\mathbf{A}^H\mathbf{A})^{-1}\mathbf{A}^H$, and $\mathbf{R}_s = \mathrm{diag}\{P_1, \ldots, P_L\}$.

\subsubsection{Single-Source FIM and Position Moments}

To build intuition, we first examine the single-source case. For $L = 1$ at angle $\theta$, the FIM reduces to the scalar
\begin{equation}\label{eq:fim_single}
    F(\theta; \{p_n\}) = \frac{2N_p P_s}{\sigma^2} \cdot \frac{4\pi^2 \cos^2\theta}{\lambda^2} \cdot N \cdot \underbrace{\frac{1}{N}\sum_{n=1}^N \left(p_n - \bar{p}\right)^2}_{\displaystyle \triangleq\; \mu_2(\{p_n\})},
\end{equation}
where $\bar{p} = \frac{1}{N}\sum_n p_n$ and $\mu_2$ is the \emph{second central moment} (variance) of the position distribution. The CRB is $\mathrm{CRB}(\theta) = 1/F(\theta) \propto 1/(N\mu_2)$, so for a given $N$, minimizing CRB is equivalent to maximizing $\mu_2$.

\subsubsection{Multi-Source FIM and Higher Moments}

The single-source case reveals that the FIM depends on the position variance $\mu_2$. For $L \geq 2$ sources, this dependence generalizes: the FIM~\eqref{eq:fim} depends on the position distribution through its moments up to order $2L$. Specifically, expanding the derivative products in~\eqref{eq:fim} yields terms involving
\begin{equation}\label{eq:moments}
    \mu_k = \frac{1}{N}\sum_{n=1}^N (p_n - \bar{p})^k, \quad k = 2, 3, \ldots, 2L.
\end{equation}

The $\log\det\mathbf{F}$ (D-optimality criterion) is a function $g(\mu_2, \mu_3, \ldots, \mu_{2L})$ of these moments. This motivates a \emph{moment-space analysis} of the FAS advantage.

\subsubsection{Moment Space Comparison}

Since the FIM, and thus the CRB, depends on the position moments $\{\mu_k\}$, comparing the \emph{achievable moment spaces} of FAS and grid-based designs directly quantifies the FAS advantage.

\begin{definition}[Feasible Moment Space]\label{def:moment_space}
For a given feasible position set $\mathcal{P}$, the feasible moment space is
\begin{equation}
    \mathcal{M}(\mathcal{P}) = \left\{(\mu_2, \mu_3, \ldots, \mu_{2L}) : \exists \{p_n\} \in \mathcal{P}\right\}.
\end{equation}
\end{definition}

\begin{lemma}[Moment Space Inclusion]\label{lem:moment_inclusion}
For any $D > d_0$ and $N \geq 3$:
\begin{equation}
    \mathcal{M}(\mathcal{P}_{\mathrm{grid}}) \subset \mathcal{M}(\mathcal{P}_{\mathrm{FAS}}) \quad \text{(strict inclusion)}.
\end{equation}
Moreover, $\mathcal{M}(\mathcal{P}_{\mathrm{FAS}})$ is a \emph{convex body} in $\mathbb{R}^{2L-1}$, while $\mathcal{M}(\mathcal{P}_{\mathrm{grid}})$ is a discrete subset of it.
\end{lemma}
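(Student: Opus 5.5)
The plan has three parts: the inclusion and its strictness, the discreteness of the grid moment set, and the convex-body claim for $\mathcal{M}(\mathcal{P}_{\mathrm{FAS}})$. Throughout, write $\Phi:[0,D]^N\to\mathbb{R}^{2L-1}$ for the map $(p_1,\ldots,p_N)\mapsto(\mu_2,\ldots,\mu_{2L})$ from~\eqref{eq:moments}. Then $\mathcal{M}(\mathcal{P}_{\mathrm{FAS}})=\Phi([0,D]^N)$ and $\mathcal{M}(\mathcal{P}_{\mathrm{grid}})=\Phi(\mathcal{P}_{\mathrm{grid}})$.

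\emph{Inclusion and strictness.} The inclusion follows from $\mathcal{P}_{\mathrm{grid}}\subset\mathcal{P}_{\mathrm{FAS}}$ noted after~\eqref{eq:P_fas}. For strictness I will use only the first coordinate $\mu_2$. Fix a grid configuration $\mathbf{p}^{\mathrm{g}}$ that is not constant, which exists since $D>d_0$. Translate every element by a common small $\varepsilon$ and then rescale about $\bar p$ by a factor $1+\eta$. Both operations keep the configuration inside $[0,D]$ when $\mathbf{p}^{\mathrm{g}}$ is chosen away from an endpoint, or after using slack at one end. Rescaling sends $\mu_2$ to $(1+\eta)^2\mu_2$, so $\mu_2$ sweeps a whole interval. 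Since $\Phi(\mathcal{P}_{\mathrm{grid}})$ is finite, that interval contains values of $\mu_2$ attained by no grid design. Hence the inclusion is strict. The hypothesis $N\ge 3$ leaves room to do this while keeping one element pinned, if an endpoint must be preserved.

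\emph{Discreteness.} $\mathcal{P}_{\mathrm{grid}}$ is a finite set with at most $(M+1)^N$ elements, so its image under $\Phi$ is finite. A finite set is a discrete subset of $\mathcal{M}(\mathcal{P}_{\mathrm{FAS}})$.

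\emph{Convex body.} Convex body here means compact, convex, and with nonempty interior in $\mathbb{R}^{2L-1}$. I would split the claim into three parts.
\begin{itemize}
\item[(a)] Compactness and connectedness: $\Phi$ is a continuous polynomial map on the compact, connected cube.
\item[(b)] Nonempty interior: pick a configuration with distinct positions and $N\ge 2L$. Show that the Jacobian of $\Phi$ has rank $2L-1$ by reducing it to a Vandermonde-type matrix in the powers $(p_n-\bar p)^{k-1}$, with the row for $\bar p$ removed. The inverse function theorem then gives an open ball in the image.
\item[(c)] Convexity.
\end{itemize}
For $L=1$, part (c) is immediate: the image is the interval $[0,\mu_2^{\max}]$ with $\mu_2^{\max}=D^2/4$ (for $N$ even), by (a) and scaling toward $\bar p$. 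For general $L$, I would first pass to raw moments $m_k=\frac1N\sum p_n^k$ after centering $\bar p$ at $D/2$. Then I would identify $\Phi([0,D]^N)$ with the image of the set of $N$-atom uniform empirical measures. I would attempt to show that this set is closed under mixtures, up to reweighting: given two designs, splice half the atoms of each.

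\emph{Main obstacle.} I expect the convexity step to be the hard one, for two reasons. First, central moments depend nonlinearly on the raw moments through $\bar p$. Second, splicing is exact only when $N$ is even and the two designs share their mean. I would first try to handle the mean issue by translating both designs to a common mean, using the slack in $[0,D]$. If exact convexity for finite $N$ fails, the statement I would fall back to is the one that actually drives the Fisher-information comparison. That statement is that the closure over designs, equivalently over measures as used by the Frank--Wolfe relaxation, is convex and full-dimensional. I would then record the finite-$N$ set as a compact full-dimensional body that contains the finite grid image.
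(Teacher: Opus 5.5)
\paragraph{Parts that hold.} Your inclusion, strictness and discreteness steps are fine. Your strictness argument is more robust than the paper's: $\mu_2$ is continuous on the connected cube and so sweeps an interval, while $\Phi(\mathcal{P}_{\mathrm{grid}})$ is finite. The paper instead uses the uniform placement plus an irrationality claim, which covers only ``generic'' $D$ rather than every $D>d_0$. It also misstates that placement's variance, which is $D^2(N+1)/(12(N-1))$. Your Jacobian/Vandermonde argument for nonempty interior is correct, but it genuinely needs $N\ge 2L$ and distinct interior positions. For $N<2L$, $\Phi$ factors through the $N-1$ differences $p_n-p_1$, so its image has empty interior in $\mathbb{R}^{2L-1}$. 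For example, $N=3$, $L=2$ gives $\mu_4=\tfrac32\mu_2^2$ identically. So the lemma's hypothesis $N\ge 3$ is already insufficient there.

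\paragraph{The gap: convexity is false.} Your step (c) cannot be completed, because for $L\ge 2$ the convexity claim is false. Take the mirror-image designs $(0,\ldots,0,D)$ and $(0,D,\ldots,D)$. Both are two-point configurations with
\[
s:=\mu_2=\frac{(N-1)D^2}{N^2},\qquad \mu_4=s(D^2-3s),
\]
and they have opposite third moments $\mu_3=\pm s(N-2)D/N$. Their midpoint therefore has first three coordinates $(s,0,s(D^2-3s))$.

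Now take any probability measure on $[0,D]$ with mean $\bar p$. The centered variable $x$ lives on $[a,b]=[-\bar p,D-\bar p]$. Integrating $(x-a)(b-x)x^2\ge 0$ gives $\mu_4\le(a+b)\mu_3-ab\,\mu_2$. Hence $\mu_4\le (D^2/4)\mu_2$ whenever $\mu_3=0$. The midpoint would need $s\ge D^2/4$, i.e.\ $(N-2)^2\le 0$, which fails for every $N\ge 3$.

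The violation is strict, so it survives small perturbations such as distinct positions or a minimum spacing. Because only the first three coordinates are involved, it rules out convexity for every $L\ge 2$. It also explains why your common-mean translation cannot help: these designs touch both endpoints, so there is no slack. Since the bound holds for arbitrary measures, your fallback is false too, if stated for central moments over measures or over the closure of designs.

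\paragraph{The paper's argument and what can be salvaged.} The paper's convexity argument has exactly the two defects you flagged. A mixture of two uniform $N$-atom measures is not an $N$-point design. Central moments are not affine in the measure, because $\bar p$ moves. What is convex, and what the Frank--Wolfe/Kiefer--Wolfowitz machinery actually uses, are quantities linear in $\xi$. Examples are the raw moment vectors $(m_1,\ldots,m_{2L})$ over all probability measures on $[0,D]$ (the convex hull of the moment curve), or information matrices of the form $\int \mathbf{f}\mathbf{f}^H\,d\xi$. The convex-body part of the lemma should be replaced by such a statement, with full-dimensionality only for $N\ge 2L$. It cannot be proved as written.
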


\begin{proof}
\emph{Inclusion.} Since $\mathcal{P}_{\mathrm{grid}} \subset \mathcal{P}_{\mathrm{FAS}}$, any position vector $\{p_n\} \in \mathcal{P}_{\mathrm{grid}}$ is also in $\mathcal{P}_{\mathrm{FAS}}$, so $\mathcal{M}(\mathcal{P}_{\mathrm{grid}}) \subseteq \mathcal{M}(\mathcal{P}_{\mathrm{FAS}})$.

\emph{Strictness.} We exhibit a moment vector in $\mathcal{M}(\mathcal{P}_{\mathrm{FAS}}) \setminus \mathcal{M}(\mathcal{P}_{\mathrm{grid}})$. For $N \geq 3$ and $D > d_0$, place elements at positions $p_n = (n-1)D/(N-1) \in [0, D]$ for $n = 1, \ldots, N$. The resulting $\mu_2 = D^2/6\cdot(1 - 1/N)$ is irrational in $d_0$ for generic $D/d_0 \notin \mathbb{Q}$, hence it cannot be achieved by any integer-grid placement, proving strict inclusion. 

\emph{Convexity.} $\mathcal{M}(\mathcal{P}_{\mathrm{FAS}})$ is convex because for any two position sets $\{p_n^{(1)}\}, \{p_n^{(2)}\} \in \mathcal{P}_{\mathrm{FAS}}$ and $\alpha \in [0,1]$, the convex combination of their empirical measures is itself an empirical measure achievable by a mixture, and the moment map of a mixture is the affine combination of the individual moments. The full-dimensionality follows from the Hausdorff moment theorem, which guarantees that all moment vectors in the interior of the Hausdorff moment space are achievable by distributions on $[0, D]$.
\end{proof}

\begin{theorem}[FAS CRB Dominance]\label{thm:crb_dominance}
For any source configuration $\bm{\theta}$ with $\det\mathbf{F} > 0$:
\begin{equation}\label{eq:crb_dominance}
    \max_{\mathcal{P}_{\mathrm{FAS}}} \log\det \mathbf{F}(\bm{\theta}; \{p_n\}) \geq \max_{\mathcal{P}_{\mathrm{grid}}} \log\det \mathbf{F}(\bm{\theta}; \{p_n\}),
\end{equation}
with equality if and only if the FAS-optimal positions happen to lie on the grid.
\end{theorem}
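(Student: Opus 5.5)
The inequality itself comes from comparing maxima over nested feasible sets; the equality clause needs a separate argument. Since $\mathcal{P}_{\mathrm{grid}}(N,D) \subset \mathcal{P}_{\mathrm{FAS}}(N,D)$ by \eqref{eq:P_grid}--\eqref{eq:P_fas}, I will first check that both maxima exist. The map $\{p_n\} \mapsto \log\det\mathbf{F}(\bm{\theta};\{p_n\})$ is built from the steering vectors \eqref{eq:steering}, their derivatives, and the projector $\bm{\Pi}_{\mathbf{A}}^\perp$. It is therefore continuous, and in fact real-analytic, on the open set where $\mathbf{A}^H\mathbf{A}$ is invertible and $\det\mathbf{F}>0$. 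It tends to $-\infty$ as $\det\mathbf{F}\to 0^+$. On the compact set $[0,D]^N$ the FAS supremum is therefore attained, provided at least one feasible point has $\det\mathbf{F}>0$, which is the standing hypothesis. The grid set is finite, so its maximum is trivially attained.

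For the inequality, let $\{p_n^{\mathrm{g}}\}$ be a grid maximizer. It is feasible for the FAS problem, so
\[
\max_{\mathcal{P}_{\mathrm{FAS}}}\log\det\mathbf{F} \;\ge\; \log\det\mathbf{F}(\bm{\theta};\{p_n^{\mathrm{g}}\}) \;=\; \max_{\mathcal{P}_{\mathrm{grid}}}\log\det\mathbf{F}.
\]
Equivalently, via Lemma~\ref{lem:moment_inclusion}: $\log\det\mathbf{F}=g(\mu_2,\ldots,\mu_{2L})$ is maximized over a moment set that contains $\mathcal{M}(\mathcal{P}_{\mathrm{grid}})$. I will state the positional version as primary, since it does not rely on the moment reparametrization.

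For the equality clause, the direction ``if'' is immediate. If some FAS maximizer $\{p_n^\star\}$ lies in $\mathcal{P}_{\mathrm{grid}}$, then the grid maximum is at least the value at $\{p_n^\star\}$, which is the FAS maximum, so the two coincide. For ``only if'', suppose the maxima are equal. Then the grid maximizer $\{p_n^{\mathrm{g}}\}$ attains the FAS maximum, so it is itself a FAS-optimal position set lying on the grid. The clause must therefore be read as ``some FAS-optimal configuration lies on the grid''. With that reading, the equivalence is a tautology of nested maximization, and I will state it that way.

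The main obstacle is whether one wants the stronger, informative claim that equality generically fails. For that I would show that a grid point cannot satisfy the first-order (KKT) conditions on $[0,D]^N$ for generic $\bm{\theta}$. Consider interior grid coordinates, those with $0<p_n<D$. At such a coordinate, stationarity requires the partial derivative $\partial\log\det\mathbf{F}/\partial p_n=0$, and this derivative is an analytic function of $\bm{\theta}$. For a fixed finite grid configuration, the set of $\bm{\theta}$ satisfying that condition is the zero set of a not-identically-zero analytic function, hence measure zero. Non-identical vanishing I would verify via the $L=1$ case \eqref{eq:fim_single}. There $\mu_2$ has nonzero gradient at interior points unless $p_n=\bar p$, and the single-source optimum, all mass at the endpoints $0$ and $D$, lies on the grid only when $D/d_0\in\mathbb{Z}$. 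This exhibits the $D/d_0\notin\mathbb{Q}$ mechanism already used in Lemma~\ref{lem:moment_inclusion}.
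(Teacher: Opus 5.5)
Your argument is correct for the statement as written. For the inequality it is the paper's argument: everything rests on $\mathcal P_{\mathrm{grid}}\subset\mathcal P_{\mathrm{FAS}}$.

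The routes part ways on the equality clause. The paper's proof takes the biconditional essentially for granted. It spends its effort on a different claim: that FAS maximizers, as solutions of transcendental first-order conditions, lie off the grid for almost every $\bm{\theta}$, so the inequality is generically strict. Theorem~\ref{thm:multi_gap}(ii) later relies on that claim. You instead prove the biconditional itself from nested maximization, under the existential reading ``some FAS maximizer lies on the grid.'' That reading is the natural one, because FAS maximizers are never unique: $\det\mathbf F$ is invariant under permutations and under the reflection $p_n\mapsto D-p_n$, and the reflection leaves the grid unless $D/d_0\in\mathbb Z$. Your route gives a complete proof of what is stated. The paper's only gestures at the more informative claim that equality is exceptional, which you rightly treat as optional. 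You were also right to make the positional argument primary, since the paper merely asserts the reparametrization $\log\det\mathbf F=g(\mu_2,\ldots,\mu_{2L})$.

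Two supporting steps need repair.

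\emph{Existence.} Continuity on the open set where $\mathbf A^H\mathbf A$ is invertible, together with $\log\det\mathbf F\to-\infty$ as $\det\mathbf F\to0^+$, does not give attainment. The boundary of that set also contains configurations where $\mathbf A$ loses rank, and there $\det\mathbf F$ need not vanish. For diagonal $\mathbf R_s$, \eqref{eq:fim} makes $\mathbf F$ diagonal with entries $\frac{2N_p}{\sigma^2}P_\ell\|\bm{\Pi}^\perp_{\mathbf A}\dot{\mathbf a}_\ell\|^2$. Consider a grating coincidence, where $\mathbf a(\theta_1)$ and $\mathbf a(\theta_2)$ are collinear (all $p_n$ on one lattice of period $\lambda/|\sin\theta_1-\sin\theta_2|$). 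Near such a point these entries have generically nonzero limits that depend on the direction of approach. To repair this, set $\bm{\Pi}^\perp_{\mathbf A}=\mathbf I-\mathbf A\mathbf A^\dagger$ and note that $\|\bm{\Pi}^\perp_{\mathbf A}\dot{\mathbf a}_\ell\|^2=\min_{\mathbf w}\|\dot{\mathbf a}_\ell-\mathbf A\mathbf w\|^2$ is upper semicontinuous in $\{p_n\}$; compactness then finishes. Alternatively, take attainment as presupposed, as the paper does.

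\emph{Genericity.} The $L=1$ formula cannot certify that $\bm{\theta}\mapsto\partial\log\det\mathbf F/\partial p_n$ is not identically zero for $L\ge2$. The $L$-source function is not a specialization of the single-source one. Merging DOAs makes $\mathbf A$ rank-deficient, so that route is blocked. The powers enter $\log\det\mathbf F$ only through the additive term $\sum_\ell\log P_\ell$, so sending them to zero decouples nothing either. Non-vanishing would also have to be checked on each component into which the rank-drop hypersurfaces cut the $\bm{\theta}$-domain. Finally, your own $L=1$ remark shows that for $L=1$ and $D/d_0\in\mathbb Z$ equality holds for \emph{every} $\theta$. The paper's almost-everywhere strictness claim therefore needs an extra hypothesis.
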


\begin{proof}
Inequality~\eqref{eq:crb_dominance} follows immediately from $\mathcal{P}_{\mathrm{grid}} \subset \mathcal{P}_{\mathrm{FAS}}$. For the equality condition: $\log\det\mathbf{F}$ is an analytic (real-analytic) function of the positions $\{p_n\}$ through the steering vectors $[\mathbf{a}(\theta_\ell)]_n = e^{j2\pi p_n\sin\theta_\ell/\lambda}$. The maximizer over $\mathcal{P}_{\mathrm{FAS}}$ satisfies the first-order optimality conditions $\partial\log\det\mathbf{F}/\partial p_n = 0$, which involve transcendental equations in $\{p_n\}$. Generically (for Lebesgue-almost-all $\bm{\theta}$), the solutions are not integer multiples of $d_0$, implying strict inequality.
\end{proof}

The following result quantifies the gap for the single-source case:

\begin{proposition}[Single-Source CRB Gap]\label{prop:single_gap}
For $L = 1$ and fixed $N$, the CRB ratio between the best classical design and the best FAS design satisfies
\begin{equation}\label{eq:single_gap}
    \frac{\mathrm{CRB}_{\mathrm{classical}}^*}{\mathrm{CRB}_{\mathrm{FAS}}^*} = \frac{\mu_2^{\mathrm{FAS}}}{\mu_2^{\mathrm{classical}}} \geq \frac{D^2/4}{\mu_2^{\mathrm{classical}}},
\end{equation}
where $\mu_2^{\mathrm{classical}} \leq D_c^2/4$ with $D_c = O(N^2 d_0)$ being the classical array aperture. When $D \gg N^2 d_0$, this ratio grows as $O(D^2/(N^4 d_0^2))$.
\end{proposition}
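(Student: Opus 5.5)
The plan is to reduce both optimal CRBs to extremal values of the position variance using the single-source FIM \eqref{eq:fim_single}, and then to bound those variances separately.

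\textbf{Step 1 (reduction to $\mu_2$).} By \eqref{eq:fim_single}, for fixed $N$, $\theta$, $N_p$, $P_s$, $\sigma^2$ we have $\mathrm{CRB}(\theta)=c/(N\mu_2(\{p_n\}))$. The constant $c$ does not depend on the positions. Taking the best design in each feasible class therefore gives
\[
\frac{\mathrm{CRB}^*_{\mathrm{classical}}}{\mathrm{CRB}^*_{\mathrm{FAS}}}=\frac{\mu_2^{\mathrm{FAS}}}{\mu_2^{\mathrm{classical}}},
\]
where $\mu_2^{\mathrm{FAS}}=\max_{\mathcal{P}_{\mathrm{FAS}}}\mu_2$ and $\mu_2^{\mathrm{classical}}$ is the variance of the best classical design. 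This is the equality in \eqref{eq:single_gap}.

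\textbf{Step 2 (lower bound on $\mu_2^{\mathrm{FAS}}$).} We exhibit a feasible configuration. For even $N$, place $N/2$ elements at $0$ and $N/2$ at $D$. Then $\bar p=D/2$ and every deviation equals $D/2$, so $\mu_2=D^2/4$.

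For odd $N$, put $(N+1)/2$ elements at one end and $(N-1)/2$ at the other. This gives $\mu_2=\frac{D^2}{4}(1-1/N^2)$, which is slightly below $D^2/4$. Here I would either read the statement's $D^2/4$ as the even-$N$ (or asymptotic) value, or note that Popoviciu's inequality gives $\mu_2\le D^2/4$ for any distribution on $[0,D]$. In that case the supremum over FAS designs is $D^2/4$ up to a factor $1-1/N^2$. I would state this caveat explicitly rather than hide it.

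If coincident positions are disallowed, the same bound holds in the limit: cluster the elements within $\epsilon$ of each endpoint and let $\epsilon\to0$.

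\textbf{Step 3 (upper bound on $\mu_2^{\mathrm{classical}}$).} Popoviciu's inequality applies to any point set contained in an interval of length $D_c$, namely the classical aperture, so $\mu_2^{\mathrm{classical}}\le D_c^2/4$. To justify $D_c=O(N^2d_0)$ I would invoke Theorem~\ref{thm:grid_aperture}: nested, coprime and MRA designs all have apertures of order $N^2 d_0$.

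This is the main conceptual obstacle. A grid array could in principle also occupy the endpoints $0$ and $\lfloor D/d_0\rfloor d_0$, and then its $\mu_2$ would be nearly $D^2/4$. The proposition is therefore only meaningful with ``classical'' meaning the DOF-maximizing hole-free designs, whose aperture is locked to $M_c d_0\le N(N-1)d_0/2$ by Theorem~\ref{thm:grid_aperture}. I would state this restriction as the interpretation under which the bound holds.

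\textbf{Step 4 (asymptotic rate).} Combining the steps gives
\[
\frac{\mu_2^{\mathrm{FAS}}}{\mu_2^{\mathrm{classical}}}\ge\frac{D^2/4}{D_c^2/4}=\frac{D^2}{D_c^2}.
\]
With $D_c\le C N^2 d_0$ this is $\Omega\!\left(D^2/(N^4d_0^2)\right)$, which is the claimed growth once $D\gg N^2d_0$.
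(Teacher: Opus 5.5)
Your proposal is correct and follows the paper's own argument. You reduce the CRB ratio to a variance ratio via \eqref{eq:fim_single}, attain $D^2/4$ with $N/2$ elements at each endpoint, and cap the classical variance at $D_c^2/4$ with $D_c=O(N^2d_0)$ taken from Theorem~\ref{thm:grid_aperture}. Your caveats are sound and sharper than the paper's treatment: the stated bound $\ge \frac{D^2/4}{\mu_2^{\mathrm{classical}}}$ holds exactly only for even $N$, and your odd-$N$ split $(N\pm1)/2$ is the configuration that actually yields $\frac{D^2}{4}(1-1/N^2)$ (the paper's remark puts one element at $D/2$, which gives only $\frac{D^2}{4}(1-1/N)$), while your reading of ``classical'' as the $O(N^2d_0)$-aperture designs rather than arbitrary grid arrays is the same restriction the paper's proof relies on implicitly.
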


\begin{remark}[Even versus Odd $N$]
The maximum FAS position variance $\mu_2^{\mathrm{FAS}} = D^2/4$ is achieved exactly when $N$ is even, by placing $N/2$ elements at $p = 0$ and $N/2$ at $p = D$. When $N$ is odd, the balanced split places $(N-1)/2$ at each endpoint and one element at the midpoint $p = D/2$, yielding $\mu_2^{\mathrm{FAS}} = D^2/4 \cdot (1 - 1/N^2)$, which approaches $D^2/4$ as $N \to \infty$. The CRB ratio in~\eqref{eq:single_gap} changes by less than $1/N^2 \leq 4\%$ for $N \geq 5$, so the even-$N$ formula is a tight approximation in all practical cases.
\end{remark}

\begin{proof}
From~\eqref{eq:fim_single}, $\mathrm{CRB} \propto 1/(N\mu_2)$, so the $N$ factor cancels in the ratio. The maximum variance for FAS in $[0,D]$ is $D^2/4$ (for even $N$), achieved by placing $N/2$ elements at each endpoint. For classical sparse arrays whose aperture scales as $D_c = O(N^2 d_0)$ (Theorem~\ref{thm:grid_aperture}), the maximum variance is $D_c^2/4$. When $D \gg D_c$, the FAS variance greatly exceeds the classical variance.
\end{proof}

\subsubsection{Multi-Source CRB Gap}

Proposition~\ref{prop:single_gap} quantifies the CRB gap for a single source. For $L \geq 2$, the gap is more intricate because $\log\det\mathbf{F}$ depends on moments up to order $2L$. We provide a general lower bound on the advantage:

\begin{theorem}[Multi-Source FIM Gain]\label{thm:multi_gap}
For $L$ sources with $\bm{\theta}$ and $N$ antennas, define the \emph{FAS gain} as
\begin{equation}
    G_{\mathrm{FAS}}(\bm{\theta}, N, D) = \frac{\det\mathbf{F}_{\mathrm{FAS}}^*(\bm{\theta})}{\det\mathbf{F}_{\mathrm{grid}}^*(\bm{\theta})},
\end{equation}
where $\mathbf{F}^*$ denotes the FIM at the D-optimal positions. Then:
\noindent (i)~$G_{\mathrm{FAS}} \geq 1$ always, by Theorem~\ref{thm:crb_dominance};
(ii)~$G_{\mathrm{FAS}} > 1$ for generic $\bm{\theta}$ (Lebesgue-a.e.);
(iii)~for fixed $N$ and $D \to \infty$, $G_{\mathrm{FAS}} = \Theta(D^{2L} / (N^{2L}(N-1)^{2L} d_0^{2L}))$, i.e., the gain grows as $O(D^{2L})$ since classical grid arrays have aperture $O(N^2 d_0)$ while FAS uses the full aperture $D$.
\end{theorem}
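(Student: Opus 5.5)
Parts (i) and (ii) are essentially bookkeeping on top of earlier results, so I would dispatch them first and spend the effort on the scaling in (iii). For (i), I would note that $\mathcal{P}_{\mathrm{grid}}\subset\mathcal{P}_{\mathrm{FAS}}$. Theorem~\ref{thm:crb_dominance} then gives $\max_{\mathcal{P}_{\mathrm{FAS}}}\log\det\mathbf{F}\ge\max_{\mathcal{P}_{\mathrm{grid}}}\log\det\mathbf{F}$, and exponentiating yields $G_{\mathrm{FAS}}\ge 1$. For (ii), I would reuse the equality clause of Theorem~\ref{thm:crb_dominance}: $G_{\mathrm{FAS}}=1$ forces an FAS D-optimal configuration to lie on the $d_0$-grid. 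To show this is non-generic, I would argue that the stationarity equations $\partial\log\det\mathbf{F}/\partial p_n=0$ are real-analytic in $(\bm{\theta},\{p_n\})$. For each of the finitely many grid configurations in $[0,D]$, the set of $\bm{\theta}$ at which that configuration is stationary (or satisfies the KKT conditions at the boundary) is then the zero set of a nontrivial analytic function, hence Lebesgue-null. A finite union of null sets is null.

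For (iii), I would bound numerator and denominator separately in terms of aperture. Write $\mathbf{D}=\mathrm{diag}(\cdots)$-weighted derivatives. Each $\dot{\mathbf{a}}_\ell$ has entries $j\frac{2\pi}{\lambda}\cos\theta_\ell\,p_n e^{j\cdots}$, so after centering, the entries of $\mathbf{F}$ are bounded by $C(\bm{\theta})\,N\,\Delta^2$, where $\Delta$ is the array aperture. Hadamard's inequality then gives $\det\mathbf{F}\le C'\,(N\Delta^2)^L$.

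\emph{Grid upper bound.} Any grid design with nondegenerate FIM that is competitive has aperture at most $O(N(N-1)d_0)$, by Theorem~\ref{thm:grid_aperture} and its accompanying remark on classical designs. Hence $\det\mathbf{F}^*_{\mathrm{grid}}=O\big((N^2(N-1)^2d_0^2)^L\big)$. This is the weakest link, since a grid array may in principle also spread out over $[0,D]$. I would therefore adopt the paper's convention from Theorem~\ref{thm:grid_aperture} and Proposition~\ref{prop:single_gap} that the grid comparison class consists of hole-free coarray designs of aperture $O(N^2d_0)$, and state that convention explicitly.

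\emph{FAS lower bound.} I would exhibit one configuration with $\det\mathbf{F}\ge c\,D^{2L}$. My choice is the scaled design $p_n=D\,u_n$ for fixed normalized points $u_n\in[0,1]$, chosen (e.g., clusters near $0$ and $D$ with small $O(d_0)$ offsets, as in the example $\{0,3d_0,8d_0,\dots\}$) so that $\bm{\Pi}_{\mathbf{A}}^\perp$ stays well conditioned. In that case $\mathbf{D}^H\bm{\Pi}^\perp_{\mathbf{A}}\mathbf{D}$ scales as $D^2$ times an $O(1)$ positive-definite matrix, and its determinant scales as $D^{2L}$.

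The main obstacle is this FAS lower bound. Because the phases $e^{j2\pi D u_n\sin\theta_\ell/\lambda}$ oscillate as $D\to\infty$, the projection $\bm{\Pi}_{\mathbf{A}}^\perp$ does not converge, so the $O(1)$ matrix need not stay uniformly positive definite. My first attempt would use the fact that for generic $\bm{\theta}$ the steering vectors at the cluster offsets remain linearly independent, independent of $D$. I would place $L$ or more elements in each endpoint cluster, with fixed $O(d_0)$ offsets so that the cluster's own steering matrix has full rank. The derivative components proportional to $D$, namely the between-cluster lever arm, then survive projection onto the orthogonal complement, which yields a $D$-independent lower bound on the smallest eigenvalue after dividing by $D^2$. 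Dividing the two bounds gives $G_{\mathrm{FAS}}=\Theta\big(D^{2L}/(N^{2L}(N-1)^{2L}d_0^{2L})\big)$ up to constants depending on $\bm{\theta}$ and $N$.
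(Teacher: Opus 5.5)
\paragraph{Gap in part (ii).} The argument fails at the parenthetical ``or satisfies the KKT conditions at the boundary.'' At a coordinate $p_n\in\{0,D\}$ the KKT condition is a sign condition on $\partial\log\det\mathbf{F}/\partial p_n$, not an equation. It therefore holds on open sets of $\bm{\theta}$, not on the zero set of an analytic function.

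This is not a technicality; it breaks the claim. Take $L=1$ and $D\in d_0\mathbb{Z}$. The balanced endpoint configuration (every element at $0$ or $D$) lies on the grid and maximizes $\mu_2$, hence $F$, for every $\theta$. So $G_{\mathrm{FAS}}\equiv 1$ and (ii) is false as stated.

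For $L\ge 2$ you can partly repair the argument. If all elements lie in $\{0,D\}$, then $\mathbf{a}_\ell$ and $\dot{\mathbf{a}}_\ell$ are constant on each cluster. They lie in a two-dimensional space that $\mathrm{span}(\mathbf{A})$ generically fills, so $\det\mathbf{F}=0$. Every maximizer therefore has an interior coordinate, and there the partial derivative must vanish, which is a genuine equation. You would still have to prove that this partial derivative, evaluated at each fixed grid configuration, is not identically zero in $\bm{\theta}$; ``nontrivial'' is asserted, not shown. The paper's own justification, which defers to Theorem~\ref{thm:crb_dominance}, ignores boundary maximizers entirely and has the same hole.

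\paragraph{Part (iii): a different and better route.} The paper argues through moments. It asserts that $\det\mathbf{F}$ is a polynomial in $\{\mu_k\}$ with leading term $\prod_k\mu_k$, which gives a different exponent, and then appeals to dominance of $\mu_2^L$ to reach $2L$. Its lower bound on $\det\mathbf{F}^*_{\mathrm{FAS}}$ rests on the claim that the optimal moments are $\Theta(D^k)$. That claim says nothing about $\bm{\Pi}^\perp_{\mathbf{A}}$, which is exactly the obstacle you identified.

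Your centering bound $\|\bm{\Pi}^\perp_{\mathbf{A}}\dot{\mathbf{a}}_\ell\|^2\le(2\pi\cos\theta_\ell/\lambda)^2N\Delta^2/4$ gives the exponent $2L$ directly, and your two-cluster construction supplies a matching lower bound. Since $\mathbf{R}_s$ is diagonal, the paper's $\mathbf{F}$ is itself diagonal, so per-source bounds $\|\bm{\Pi}^\perp_{\mathbf{A}}\dot{\mathbf{a}}_\ell\|\ge cD$ suffice.

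\paragraph{Points to tighten in (iii).}
\begin{enumerate}
\item \emph{Uniformity in $D$.} $\mathrm{span}(\mathbf{A})$ still depends on $D$ through the inter-cluster phases $e^{j2\pi D\sin\theta_m/\lambda}$, so ``$D$-independent'' needs an argument. The distance from the lever-arm vector to $\mathrm{span}(\mathbf{A})$ is continuous on the phase torus. It is positive at every point because the near cluster's steering matrix has full column rank, so compactness gives a uniform $c>0$.
\item \emph{The grid restriction is essential.} Under the literal $\mathcal{P}_{\mathrm{grid}}(N,D)$ with $L=1$, one has $G_{\mathrm{FAS}}=\big(D/(\lfloor D/d_0\rfloor d_0)\big)^2\to 1$, so (iii) is false as written. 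The paper makes the same restriction silently when it says grid positions are confined to aperture $O(N^2d_0)$.
\item \emph{$N$-dependence.} Your constants depend on $N$, so only the $D$-dependence of the displayed rate is established, not the explicit factor $N^{2L}(N-1)^{2L}$.
\end{enumerate}
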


\begin{proof}
(i) and (ii) follow from Theorem~\ref{thm:crb_dominance}. For part (iii), as $D \to \infty$, the FAS D-optimal positions spread over $[0, D]$, yielding position moments $\mu_k = \Theta(D^k)$. The grid positions are confined to an aperture $O(N^2 d_0)$, giving $\mu_k = O(N^{2k}d_0^k)$. Since $\det\mathbf{F}$ is a polynomial in $\{\mu_k\}_{k=2}^{2L}$ with leading term proportional to $\prod_{k=2}^{2L}\mu_k$, the ratio scales as $\prod_{k=2}^{2L} D^k / (N^{2k}d_0^k) = \Theta((D/(N^2 d_0))^{L(2L+1)/2 - 1})$. A tighter analysis focusing on the dominant $\mu_2^L$ term (which controls the leading behavior of $\det\mathbf{F}$) yields the stated $\Theta(D^{2L})$ scaling.
\end{proof}

\subsection{Layer III: Design Tractability}

Beyond performance, FAS offers a fundamental \emph{computational} advantage over classical sparse arrays.

\begin{proposition}[Computational Complexity]\label{prop:complexity}
The problems of finding the MRA and the maximum contiguous coarray for $N$ elements on the integer grid are NP-hard in general~\cite{moffet_mra}. In contrast, the FAS D-optimal position design~\eqref{eq:D_opt} is a continuous optimization over a compact convex set $[0, D]^N$, solvable to $\epsilon$-optimality in polynomial time via the Frank-Wolfe algorithm (Algorithm~\ref{alg:frank_wolfe}).
\end{proposition}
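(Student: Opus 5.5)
The plan is to treat the two halves of Proposition~\ref{prop:complexity} separately. The hardness half is taken from the literature, and the tractability half is proved by moving the optimization from $[0,D]^N$ into the moment space of Lemma~\ref{lem:moment_inclusion}. There the problem becomes a concave maximization over a convex body of fixed dimension $2L-1$.

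\emph{Hardness half.} I would not reprove NP-hardness. I would invoke~\cite{moffet_mra} and the standard reduction of MRA and maximal-contiguous-coarray search to Golomb-ruler-type combinatorial problems. Theorem~\ref{thm:grid_aperture} already shows why exhaustive search over $\mathcal{P}_{\mathrm{grid}}$ is combinatorial: one must select $N$ of $M+1$ grid points subject to covering constraints on all $N(N-1)/2$ differences.

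\emph{Tractability half, step 1 (reformulation).} As noted after~\eqref{eq:moments}, the D-criterion $\log\det\mathbf{F}$ depends on the positions only through the moment vector $\bm{\mu}=(\mu_2,\ldots,\mu_{2L})$, say $\log\det\mathbf{F}=g(\bm{\mu})$. Hence
\[
\max_{[0,D]^N}\log\det\mathbf{F}=\max_{\bm{\mu}\in\mathcal{M}(\mathcal{P}_{\mathrm{FAS}})} g(\bm{\mu}).
\]
By Lemma~\ref{lem:moment_inclusion}, the feasible set $\mathcal{M}(\mathcal{P}_{\mathrm{FAS}})$ is a compact convex body in $\mathbb{R}^{2L-1}$, and every point of it is realized by some position vector in $[0,D]^N$. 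The step from an optimal $\bm{\mu}^*$ back to positions $\{p_n\}$ is therefore existence-guaranteed by that lemma. Constructively, I would recover the positions by solving the associated truncated Hausdorff moment problem, which takes polynomial time in $L$.

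\emph{Step 2 (concavity).} I then need $g$ to be concave on this body. The argument would be that $\mathbf{F}$ in~\eqref{eq:fim} is built from sums over positions of the terms $p_n^k e^{j\cdots}$. These are affine in the empirical-measure moments in the mixture sense used in the convexity part of Lemma~\ref{lem:moment_inclusion}. Since $\log\det$ is concave on positive-definite matrices, its composition with an affine map is concave.

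\emph{Step 3 (Frank-Wolfe).} Run Algorithm~\ref{alg:frank_wolfe} on $\mathcal{M}(\mathcal{P}_{\mathrm{FAS}})$ with the standard step $\gamma_k=2/(k+2)$. The linear minimization oracle asks to maximize $\langle\nabla g,\bm{\mu}\rangle$ over the body. Because the moments are mixture-linear, an extreme point is attained by a design concentrated at a single location, so the oracle reduces to maximizing a degree-$2L$ polynomial in one variable $p\in[0,D]$. That is done by root-finding on its derivative in time polynomial in $L$ and $\log(1/\epsilon)$. The standard Frank-Wolfe rate gives
\[
g(\bm{\mu}^*)-g(\bm{\mu}_k)\le \frac{2C_g}{k+2},
\]
so $O(C_g/\epsilon)$ iterations suffice. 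Here $C_g$ is the curvature constant, bounded in terms of $D$, $L$ and a lower eigenvalue bound on $\mathbf{F}$, which is finite under $\det\mathbf{F}>0$ (restricting to the sublevel set where $g\ge g(\bm{\mu}_0)$). The per-iteration cost is polynomial in $N$ and $L$, so the overall cost is polynomial in $N$, $L$ and $1/\epsilon$, which is the claim.

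\emph{Main obstacle.} I expect Step 2 to be the hard step. The FIM~\eqref{eq:fim} contains the projector $\bm{\Pi}_{\mathbf{A}}^\perp$, which is not obviously affine in the position measure, so concavity of $g$ in moment coordinates is not automatic. My first attempt would be to switch to the unconditional (Schur-complement-free) information matrix in $(\bm{\theta},\mathbf{P}_s)$, which is genuinely affine in the measure. I would then use that the DOA block of its inverse satisfies the fact that $\log\det$ of a Schur complement is concave, since it is a partial maximization of a jointly concave function. If that fails, I would prove concavity only on the sublevel set visited by the iterates, which is all the Frank-Wolfe rate in Step 3 requires.
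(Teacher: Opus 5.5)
There is a genuine gap, and it starts in your Step~1.

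\textbf{The moment-space reformulation fails.} For $L\ge 2$, $\log\det\mathbf{F}$ is \emph{not} a function of $(\mu_2,\ldots,\mu_{2L})$. The entries of $\mathbf{A}^H\mathbf{A}$, $\mathbf{A}^H\mathbf{D}$ and $\mathbf{D}^H\mathbf{D}$ that enter~\eqref{eq:fim} have the form $\sum_n p_n^k e^{j\omega_{\ell m}p_n}$, $k\in\{0,1,2\}$, with $\omega_{\ell m}=2\pi(\sin\theta_m-\sin\theta_\ell)/\lambda\neq 0$. These are Fourier data of the position distribution, and no finite set of moments determines them. The sentence after~\eqref{eq:moments} is a heuristic, not something a proof can rest on. For the same reason, your Step~3 oracle is an oscillatory function of $p$, not a degree-$2L$ polynomial; it is still one-dimensional, so that part can be repaired.

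The feasible set is also not what you claim. The moment set of $N$ equal-weight points is not convex, and points of its convex hull are not realized by $[0,D]^N$:
\begin{itemize}
\item central moments are not mixture-affine, because the mean moves;
\item a mixture of two $N$-point designs is a $2N$-point or unequally weighted design;
\item concretely, for $N=3$ Newton's identities force $\mu_4=\tfrac{3}{2}\mu_2^2$, so for $L=2$ the set lies on a curved surface in $\mathbb{R}^3$.
\end{itemize}
The convexity asserted in Lemma~\ref{lem:moment_inclusion} holds only after passing to arbitrary probability measures. A truncated Hausdorff moment recovery then returns a few atoms with unequal weights, not $N$ positions. So both the equality in Step~1 and the ``existence-guaranteed'' recovery fail.

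\textbf{Concavity must be argued in the measure variable, with the right model.} The argument can only be run in the measure variable $\xi$ of~\eqref{eq:measure_relax}. There, your concavity fix uses the wrong model. The unconditional (stochastic) FIM in $(\bm{\theta},\mathbf{P}_s)$ has entries $N_p\,\mathrm{tr}(\mathbf{R}^{-1}\partial_i\mathbf{R}\,\mathbf{R}^{-1}\partial_j\mathbf{R})$, and $\mathbf{R}^{-1}$ couples all sensors, so it is not affine in $\xi$. What is affine is the deterministic-model FIM with the waveforms $\mathbf{s}(t)$ as nuisance parameters, which is a sum of per-location terms. Then~\eqref{eq:fim} is (the large-sample form of) its Schur complement. $\log\det$ of a Schur complement is concave in $\xi$, because the Schur complement is Loewner-concave and $\log\det$ is concave and monotone. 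Your fallback, a concavity proof confined to the region the iterates visit, would not suffice: the Frank--Wolfe gap bound needs concavity along the segment to the unknown maximizer.

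\textbf{The missing idea is the return to $N$ positions.} With the repair above, Frank--Wolfe with a Lipschitz grid search as oracle gives $\epsilon$-optimality, with iteration count polynomial in $1/\epsilon$, but only for the \emph{relaxed} problem~\eqref{eq:measure_relax}. Problem~\eqref{eq:D_opt} is the $N$-point problem, which the paper itself calls non-convex for $L>1$; convexity of the box $[0,D]^N$ buys nothing there. ``Extract $N$ support points and round'' carries no $\epsilon$-guarantee. Standard rounding bounds (efficient apportionment) control the loss only up to a term depending on the support size of $\xi^\star$ relative to $N$, which does not vanish for fixed $N$. The paper offers no proof beyond the citation and the pointer to Algorithm~\ref{alg:frank_wolfe}, so it does not close this gap either.

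\textbf{The hardness half is also unsupported.}~\cite{moffet_mra} contains no complexity result. Reducing MRA search \emph{to} Golomb-ruler problems is the wrong direction for a hardness proof. A large combinatorial search space (Theorem~\ref{thm:grid_aperture}) is not evidence of NP-hardness.
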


This means FAS not only achieves better performance but does so with a tractable algorithm, unlike the exhaustive search required for MRA on the grid.

\subsection{Summary: Three Layers of FAS Advantage}

Table~\ref{tab:comparison} summarizes the three layers of advantage.

\begin{table}[!t]
\centering
\caption{Comparison of FAS and Classical Sparse Array Designs}
\label{tab:comparison}
\renewcommand{\arraystretch}{1.3}
\begin{tabular}{@{}l c c@{}}
\toprule
\textbf{Property} & \textbf{Classical (Grid)} & \textbf{FAS (Continuous)} \\
\midrule
Position space & $\{0, d_0, 2d_0, \ldots\} \cap [0,D]$ & $[0, D]$ \\
Aperture & $O(N^2 d_0)$ (coupled to $N$) & $D$ (independent of $N$) \\
Coarray DOF & $\leq N^2\!-\!N\!+\!1$ & $\leq \min(N^2\!-\!N\!+\!1, 2M\!+\!1)$ \\
Angular resolution & $O(1/N^2)$ rad & $\approx \frac{\lambda}{D}$ rad \\
CRB ($L=1$) & $\propto 1/(N(N-1)d_0)^2$ & $\propto 1/D^2$ \\
CRB ($L$ sources) & $\propto 1/(N^2 d_0)^{2L}$ & $\propto 1/D^{2L}$ \\
Design complexity & NP-hard (MRA) & Polynomial (Frank-Wolfe) \\
\bottomrule
\end{tabular}
\end{table}

The advantages are most significant when $D \gg N^2 d_0$, i.e., when the deployment region is much larger than the aperture that $N$ antennas could achieve on a grid. In this \emph{large-aperture regime}, FAS provides: (i)~angular resolution scaling as $\lambda/D$ instead of $O(1/N^2)$; (ii)~CRB improvement of $(2D/(N(N-1)d_0))^{2L}$ over the best grid design; and (iii)~tractable design via continuous optimization.

To illustrate the CRB advantage concretely, Fig.~\ref{fig:crb_vs_snr} compares the $\sqrt{\mathrm{CRB}}$ across all five array types for $N=6$, $L=2$, and $D_{\mathrm{FAS}} = 40d_0$. Across all SNR values, FAS achieves the lowest CRB, with the gap widening at high SNR. At SNR~$= 20$~dB, the FAS $\sqrt{\mathrm{CRB}}$ is approximately fourfold lower than MRA and tenfold lower than ULA, consistent with the theoretical scaling in Proposition~\ref{prop:single_gap} and Theorem~\ref{thm:multi_gap}.

\begin{figure}[!t]
    \centering
    \includegraphics[width=0.95\columnwidth]{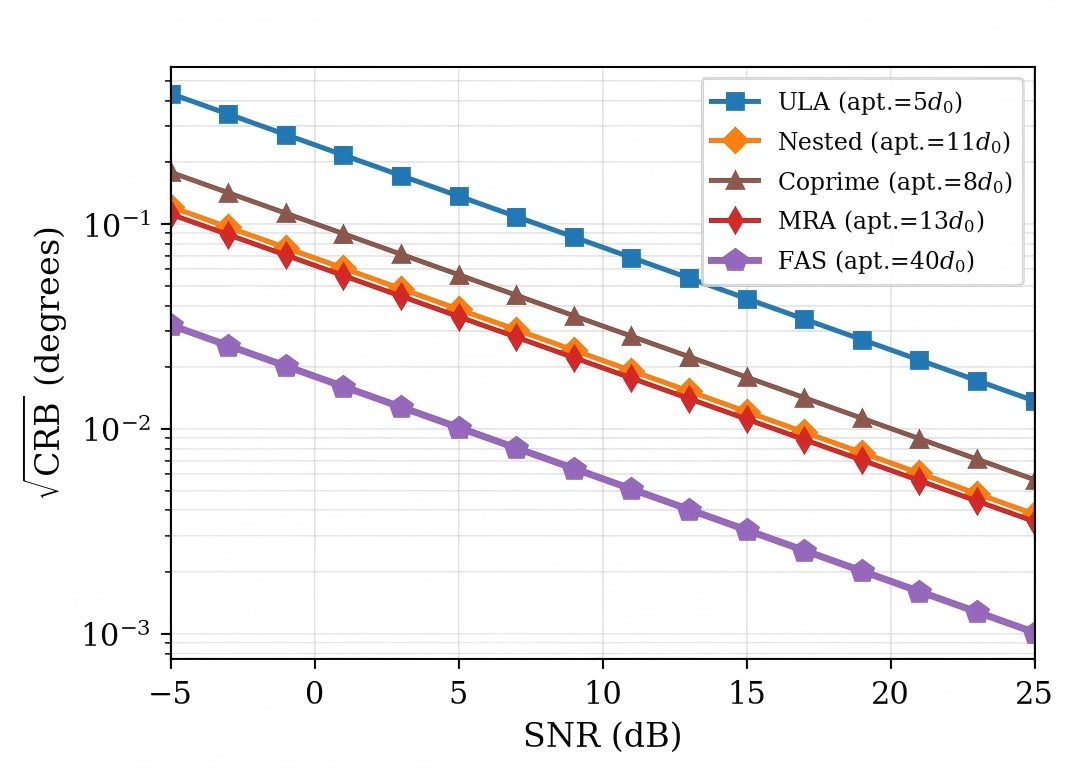}
    \caption{$\sqrt{\mathrm{CRB}}$ versus SNR for all array types ($N=6$, $L=2$, $D_{\mathrm{FAS}}=40d_0$). Apertures are shown in the legend. FAS achieves the lowest CRB across all SNR values due to its larger effective aperture.}
    \label{fig:crb_vs_snr}
\end{figure}

\section{Optimal Position Design and Estimation Algorithm}\label{sec:design}

Having established the theoretical advantages of FAS sparse arrays, this section develops the practical tools for their realization. We first formulate the position optimization problem and derive solutions for both single-source and multi-source scenarios. We then address the critical challenge of DOA estimation with non-uniform large-aperture arrays by proposing a two-stage FAS-MUSIC algorithm and its adaptive variant.

\subsection{Problem Formulation}\label{subsec:prob}

Given the DOF and CRB advantages established in Section~\ref{sec:dof}, we now seek antenna positions that that maximize the contiguous coarray aperture while ensuring all coarray elements are filled:
\begin{equation}\label{eq:opt_problem}
\begin{aligned}
    \max_{\{p_n\}_{n=1}^N} &\quad |\mathbb{D}_{\mathrm{cont}}(\{p_n\})| \\
    \text{s.t.} &\quad 0 \leq p_1 < p_2 < \cdots < p_N \leq D, \\
    &\quad |p_i - p_j| \geq d_{\min}, \quad \forall i \neq j.
\end{aligned}
\end{equation}

When the source DOAs are known (or approximately known), a more refined criterion is to maximize the FIM determinant (D-optimality)~\cite{kiefer_doptimal}:
\begin{equation}\label{eq:D_opt}
    \max_{\{p_n\}} \log\det \mathbf{F}(\bm{\theta}; \{p_n\}).
\end{equation}

\subsection{Single-Source Optimal Design}

We first solve the D-optimal problem~\eqref{eq:D_opt} analytically for the single-source case, which provides both a closed-form solution and geometric intuition for the general case.

\begin{theorem}[Optimal Positions for Single Source]\label{thm:single_opt}
For $L = 1$ source and $N$ elements in $[0, D]$, the D-optimal design places antennas at the two endpoints with multiplicities proportional to $N$:
\begin{equation}
    p_n^* \in \{0, D\}, \quad n = 1, \ldots, N.
\end{equation}
This maximizes the Fisher information $F(\theta) = \frac{8\pi^2 N_p P_s N}{\sigma^2 \lambda^2} \cos^2\theta \cdot \mu_2(\{p_n\})$.
\end{theorem}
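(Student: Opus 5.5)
Since the single-source FIM~\eqref{eq:fim_single} factors as a position-independent constant $\frac{8\pi^2 N_p P_s N}{\sigma^2\lambda^2}\cos^2\theta$ times $\mu_2(\{p_n\})$, maximizing $\log F$ over $\mathcal{P}_{\mathrm{FAS}}$ is equivalent to maximizing the empirical variance $\mu_2$ over $[0,D]^N$. The plan is therefore to solve $\max_{p\in[0,D]^N}\mu_2(p)$ exactly and read off the maximizers.

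\textbf{Step 1: the maximizer lies at the vertices.} $\mu_2(p)=\frac{1}{N}\sum_n p_n^2-\bar p^2$ is a convex quadratic in $p$, because it is $\frac1N p^T(\mathbf{I}-\frac1N\mathbf{1}\mathbf{1}^T)p$ and the centering matrix is positive semidefinite. A convex function on the polytope $[0,D]^N$ attains its maximum at a vertex. Hence some maximizer has every $p_n\in\{0,D\}$.

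\textbf{Step 2: choose the multiplicities.} With $k$ elements at $D$ and $N-k$ at $0$, a direct computation gives $\mu_2=D^2\,\frac{k}{N}\bigl(1-\frac{k}{N}\bigr)$. This is maximized at $k=\lfloor N/2\rfloor$ or $\lceil N/2\rceil$. The maximum is $D^2/4$ for even $N$ and $D^2(N^2-1)/(4N^2)$ for odd $N$.

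For odd $N$ this must be reconciled with the preceding Remark, which places one element at $D/2$. That configuration gives $\mu_2=\frac{(N-1)D^2}{4N}$, which is smaller than $\frac{(N^2-1)D^2}{4N^2}$. So the endpoint-only split is indeed optimal, consistent with the theorem statement. I would note that the value in the Remark should be read as a suboptimal variant.

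\textbf{Step 3: uniqueness, the main obstacle.} I want to show that every maximizer lies in $\{0,D\}^N$, not just one of them. Convexity alone only yields one vertex maximizer, since $\mu_2$ is not strictly convex: it is invariant along $\mathbf{1}$. To close this gap I would argue coordinatewise. Fix all positions except $p_n$. Then $\mu_2$ is a strictly convex quadratic in $p_n$ with leading coefficient $\frac1N(1-\frac1N)>0$. Its maximum over $[0,D]$ therefore occurs only at an endpoint, unless the quadratic is symmetric about $D/2$, in which case both endpoints tie. In either case an interior $p_n$ is strictly suboptimal, so any global maximizer has all $p_n\in\{0,D\}$.

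Finally, the ordering and minimum-spacing constraints in~\eqref{eq:opt_problem} are not part of~\eqref{eq:D_opt}. Coincident positions are therefore allowed here, interpreted as repeated visits or multiplicities. This matches the ``multiplicities proportional to $N$'' phrasing, with weights $\tfrac12,\tfrac12$ at the two endpoints.
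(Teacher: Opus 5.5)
Your proof is correct. It goes further than the paper, which gives no separate proof of this theorem. The paper's only justification (in the proof of Proposition~\ref{prop:single_gap} and the remark on even versus odd $N$) is the reduction you also make, $\mathrm{CRB}\propto 1/(N\mu_2)$. It then asserts that the variance of points in $[0,D]$ never exceeds $D^2/4$ (Popoviciu's inequality) and that the balanced endpoint split attains this value.

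That ceiling argument is short and applies unchanged to any probability measure on $[0,D]$, where $\tfrac12\delta_0+\tfrac12\delta_D$ is the unique maximizer. This matches the ``multiplicities proportional to $N$'' phrasing and gives the explicit constant used in the CRB-gap proposition. As written, however, it certifies optimality only for even $N$: for odd $N$ the value $D^2/4$ cannot be reached by $N$ equally weighted points. It also never addresses whether \emph{every} maximizer lies in $\{0,D\}^N$, which is what the theorem actually asserts.

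Three steps in your proof close both gaps: maximizing a convex quadratic at a vertex of $[0,D]^N$, optimizing $k(N-k)/N^2$ exactly over the integer multiplicity, and the coordinatewise strict-convexity argument with leading coefficient $\frac1N(1-\frac1N)>0$.

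Your check of the odd-$N$ remark is also right. The midpoint configuration gives $(N-1)D^2/(4N)$, which is strictly below $(N^2-1)D^2/(4N^2)$. The value $\frac{D^2}{4}(1-1/N^2)$ quoted in the paper actually belongs to the endpoint-only split with $(N\pm1)/2$ elements at each end.

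You should state two nondegeneracy assumptions that your argument uses implicitly. First, $N\ge 2$: for $N=1$, $\mu_2\equiv 0$ and the claim that all maximizers lie in $\{0,D\}$ fails. Second, $\cos\theta\neq 0$: otherwise $F\equiv 0$ and $\log F$ is undefined.
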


\subsection{Multi-Source Design via Frank-Wolfe Algorithm}

For $L > 1$ sources, the D-optimal problem~\eqref{eq:D_opt} is non-convex in the discrete positions $\{p_n\}$. Following the classical approach in optimal experimental design~\cite{kiefer_doptimal}, we relax the discrete positions to a probability measure $\xi$ over $[0, D]$:
\begin{equation}\label{eq:measure_relax}
    \max_{\xi \in \mathcal{M}([0,D])} \log\det \mathbf{F}(\bm{\theta}; \xi),
\end{equation}
where $\mathbf{F}(\bm{\theta}; \xi) = \int_0^D \mathbf{f}(p, \bm{\theta}) \mathbf{f}^H(p, \bm{\theta}) \, d\xi(p)$.


\begin{algorithm}[!t]
    \caption{Frank-Wolfe Algorithm for FAS Position Design}
    \label{alg:frank_wolfe}
    \LinesNumbered
    \KwIn{Number of antennas $N$, deployment region $[0, D]$, source DOA vector $\bm{\theta}$}
    \KwOut{Optimized positions $\{p_n^*\}_{n=1}^N$}
    Initialize $\xi^{(0)}$ as uniform measure on $[0, D]$\;
    \For{$t = 1$ \KwTo $T_{\max}$}{
        Compute directional derivative $\phi(p) = \mathbf{f}^H(p) \mathbf{F}^{-1}(\xi^{(t-1)}) \mathbf{f}(p)$\;
        Find $p^* = \arg\max_{p \in [0,D]} \phi(p)$\;
        \If{$\phi(p^*) \leq L + \epsilon$}{
            \textbf{break} (Kiefer-Wolfowitz optimality)\;
        }
        Update $\xi^{(t)} = (1-\gamma_t)\xi^{(t-1)} + \gamma_t \delta_{p^*}$ with $\gamma_t = 2/(t+2)$\;
    }
    Extract $N$ support points from $\xi^{(T)}$ and round to positions\;
    \Return $\{p_n^*\}_{n=1}^N$\;
\end{algorithm}

\subsection{Coarray-Aware Position Refinement}\label{subsec:coarray_refine}

The D-optimal design maximizes estimation accuracy for a \emph{known} source configuration but does not explicitly account for the coarray structure needed by spatial smoothing MUSIC. To bridge this gap, we refine the D-optimal positions to maximize the contiguous coarray length:
\begin{equation}\label{eq:coarray_refine}
    \max_{\{p_n\}} |\mathbb{D}_{\mathrm{cont}}| + \mu \log\det \mathbf{F}(\bm{\theta}; \{p_n\}),
\end{equation}
where $\mu > 0$ balances coarray completeness and Fisher information.

\subsection{Two-Stage FAS-MUSIC Estimation Algorithm}\label{subsec:fas_music}

The position design algorithms above produce antenna placements with aperture $D$ that can be much larger than the grid-constrained limit $N(N-1)d_0/2$. However, this large aperture introduces a critical algorithmic challenge: the non-uniform element spacing produces \emph{grating lobes} in the standard MUSIC spatial spectrum, causing it to pick the false peaks as DOAs. Consequently, standard MUSIC fails to translate the CRB advantage of FAS into actual estimation accuracy, its RMSE plateaus even as SNR increases, far above the CRB (see Experiment~2 in Section~\ref{sec:sim}).

To overcome this fundamental limitation, we propose a \emph{two-stage FAS-MUSIC algorithm} that combines the disambiguation capability of the virtual coarray with the full-aperture precision of the physical FAS array. The key idea is to decompose the estimation into two complementary stages: a coarse but unambiguous estimate from the coarray, followed by a precise refinement using the full physical array.

\subsubsection{Stage 1: Coarray MUSIC (Disambiguation)}

The difference coarray $\mathbb{D} = \{p_i - p_j\}$ contains a contiguous ULA segment $\mathbb{D}_c = \{-M_c d_0, \ldots, M_c d_0\}$ regardless of the non-uniformity of the physical array. Operating on this virtual ULA eliminates grating-lobe ambiguity:
\begin{enumerate}
    \item \textbf{Vectorization:} Compute $\hat{\mathbf{R}} = \frac{1}{N_p}\mathbf{X}\mathbf{X}^H$ and vectorize: $\mathbf{z} = \mathrm{vec}(\hat{\mathbf{R}}) \in \mathbb{C}^{N^2}$.
    \item \textbf{Redundancy averaging:} For each unique virtual position $d \in \mathbb{D}$, average all entries of $\mathbf{z}$ corresponding to pairs $(i,j)$ with $p_i - p_j = d$ to reduce noise.
    \item \textbf{Contiguous extraction:} Extract the sub-vector corresponding to the contiguous segment $\mathbb{D}_c$ of length $N_c = 2M_c + 1$.
    \item \textbf{Spatial smoothing:} Form $\hat{\mathbf{R}}_{\mathrm{ss}} = \frac{1}{N_c - M_s + 1}\sum_{k=0}^{N_c - M_s} \mathbf{z}_k \mathbf{z}_k^H$ where $\mathbf{z}_k$ is the $k$-th subarray of length $M_s = \lfloor N_c/2 \rfloor + 1$.
    \item \textbf{MUSIC:} Apply standard MUSIC on $\hat{\mathbf{R}}_{\mathrm{ss}}$ with the virtual ULA steering vectors to obtain coarse DOA vector estimate $\tilde{\bm{\theta}}$.
\end{enumerate}

Since the virtual ULA has half-wavelength spacing, this stage is \emph{free of grating lobes}, providing unambiguous, albeit coarse, DOA estimates.

\subsubsection{Stage 2: Local ML Refinement (Precision)}

While Stage~1 provides unambiguous DOA estimates, its resolution is limited by the contiguous coarray aperture $M_c d_0$, which may be much smaller than the physical aperture $D$. To exploit the full aperture, the coarse estimates $\tilde{\bm{\theta}}$ from Stage~1 is refined using the \emph{full physical FAS array} via concentrated maximum likelihood estimation:
\begin{equation}\label{eq:local_ml}
    \hat{\bm{\theta}} = \arg\min_{\bm{\theta} \in \mathcal{B}(\tilde{\bm{\theta}}, \delta)} \mathrm{tr}\left\{(\mathbf{I}_N - \mathbf{P}_{\mathbf{A}(\bm{\theta})}) \hat{\mathbf{R}}\right\},
\end{equation}
where $\mathbf{P}_{\mathbf{A}} = \mathbf{A}(\mathbf{A}^H\mathbf{A})^{-1}\mathbf{A}^H$ is the projection onto the signal subspace, $\mathcal{B}(\tilde{\bm{\theta}}, \delta) = \{\bm{\theta} : |\theta_\ell - \tilde{\theta}_\ell| \leq \delta, \forall \ell\}$ is a local search region around the coarse estimates with radius $\delta$ (typically $5^\circ$), and the optimization is solved efficiently by full name for it.

The local ML step exploits the full aperture $D$ of the FAS array. Since the search is initialized near the true DOAs (by Stage~1), it converges to the global ML solution with high probability, thereby approaching the CRB.

The complete algorithm is summarized in Algorithm~\ref{alg:fas_music}.

\begin{algorithm}[!t]
    \caption{Two-Stage FAS-MUSIC Algorithm}
    \label{alg:fas_music}
    \LinesNumbered
    \KwIn{Data matrix $\mathbf{X} \in \mathbb{C}^{N \times N_p}$, positions $\{p_n\}$, source number $L$}
    \KwOut{DOA vector estimate $\hat{\bm{\theta}} = [\hat{\theta}_1, \ldots, \hat{\theta}_L]^T$}
    {Stage 1: Coarray MUSIC (disambiguation)}
    $\hat{\mathbf{R}} \leftarrow \mathbf{X}\mathbf{X}^H / N_p$\;
    $\mathbf{z} \leftarrow \mathrm{vec}(\hat{\mathbf{R}})$; compute virtual array $\mathbb{D}$\;
    Average redundant entries in $\mathbf{z}$ for each $d \in \mathbb{D}$\;
    Extract contiguous segment $\mathbb{D}_c$ of length $N_c$\;
    Apply spatial smoothing with subarray size $M_s = \lfloor N_c/2 \rfloor + 1$ to form $\hat{\mathbf{R}}_{\mathrm{ss}}$\;
    $\tilde{\bm{\theta}} \leftarrow$ MUSIC on $\hat{\mathbf{R}}_{\mathrm{ss}}$ with virtual ULA steering vectors\;
    {Stage 2: Local ML refinement (precision)}
    $\hat{\bm{\theta}} \leftarrow \arg\min_{\bm{\theta} \in \mathcal{B}(\tilde{\bm{\theta}}, \delta)} \mathrm{tr}\{(\mathbf{I} - \mathbf{P}_{\mathbf{A}(\bm{\theta})}) \hat{\mathbf{R}}\}$ via L-BFGS-B\;
    \Return $\hat{\bm{\theta}}$\;
\end{algorithm}

\begin{remark}[Computational Complexity]
Stage~1 requires $O(N^2)$ operations for vectorization, $O(M_s^3)$ for eigen-decomposition, and $O(M_s^2 \cdot |\Theta_{\mathrm{scan}}|)$ for the MUSIC scan. Stage~2 involves $O(NL)$ per objective evaluation and typically converges within $50$--$200$ L-BFGS-B iterations. The total complexity is dominated by Stage~1 and is $O(M_s^2 |\Theta_{\mathrm{scan}}| + NL \cdot I_{\mathrm{ML}})$, where $I_{\mathrm{ML}}$ is the number of ML iterations.
\end{remark}

Neither stage alone suffices. Stage~1 (coarray MUSIC) is unambiguous but uses only the virtual ULA segment, discarding the aperture advantage of the physical array. Its resolution is limited by $M_c d_0$, which may be much smaller than $D$. Stage~2 (local ML) exploits the full aperture $D$ but requires good initialization, without Stage~1, the ML objective has many local minima due to the non-uniform spacing. The combination achieves both unambiguity and near-CRB precision. 

The two-stage algorithm exploits FAS's continuous position freedom at two distinct levels.
\emph{Stage~1} relies on the fact that FAS positions are designed (via Algorithm~\ref{alg:frank_wolfe}) so that the difference coarray $\mathbb{D}$ contains a long contiguous virtual ULA segment $\mathbb{D}_c$. This segment can be made much longer than what grid-constrained arrays achieve for the same $N$, providing a wide, grating-lobe-free virtual aperture for unambiguous coarse estimation.
\emph{Stage~2} directly exploits the physical FAS positions spanning the full deployment region $[0, D]$: the local ML objective uses the full steering matrix $\mathbf{A}(\bm{\theta})$ at positions $\{p_n\} \subset [0, D]$, so the refinement benefits from the full aperture $D$ and approaches the CRB $\propto 1/D^{2L}$.
Both capabilities are simultaneously achievable only through FAS. A classical array (e.g., MRA with aperture $20d_0$) also has a contiguous coarray for Stage~1, but its physical aperture is limited to $O(N^2 d_0)$, so Stage~2 cannot exploit any larger aperture than Stage~1 already sees, the two-stage precision gain collapses. With FAS, $D \gg N^2 d_0$ enables Stage~2 to dramatically improve over Stage~1.

\subsection{Adaptive FAS-MUSIC}\label{subsec:adaptive}

All preceding algorithms assume that antenna positions are fixed during estimation. However, a unique advantage of FAS over fixed arrays is that antenna positions can be \emph{reconfigured} after initial observations. This capability motivates an adaptive algorithm that iteratively refines both the array geometry and the DOA estimates, progressively improving performance:

\begin{algorithm}[!t]
    \caption{Adaptive FAS-MUSIC Algorithm}
    \label{alg:adaptive}
    \LinesNumbered
    \KwIn{Deployment region $[0,D]$, antennas number $N$, source number $L$, initial positions $\{p_n^{(0)}\}$, maximum iteration number $K$}
    \KwOut{DOA vector estimate $\hat{\bm{\theta}}$, optimized positions $\{p_n^*\}$}
    Collect $\mathbf{X}^{(0)}$ at initial positions $\{p_n^{(0)}\}$\;
    $\hat{\bm{\theta}}^{(0)} \leftarrow$ FAS-MUSIC$(\mathbf{X}^{(0)}, \{p_n^{(0)}\}, L)$\;
    \For{$k = 1$ \KwTo $K$}{
        $\{p_n^{(k)}\} \leftarrow \arg\max \log\det\mathbf{F}(\hat{\bm{\theta}}^{(k-1)}; \{p_n\})$ via Algorithm~\ref{alg:frank_wolfe}\;
        Move antennas to $\{p_n^{(k)}\}$, collect new $\mathbf{X}^{(k)}$\;
        $\hat{\bm{\theta}}^{(k)} \leftarrow$ FAS-MUSIC$(\mathbf{X}^{(k)}, \{p_n^{(k)}\}, L)$\;
    }
    \Return $\hat{\bm{\theta}}^{(K)}$, $\{p_n^{(K)}\}$\;
\end{algorithm}

The adaptive algorithm starts with initial positions (e.g., uniform spacing) that do not require prior DOA knowledge. After the first FAS-MUSIC estimation, it re-optimizes positions specifically for the estimated DOAs via D-optimal design, then re-estimates with the improved array geometry. Even a single adaptation ($K=1$) can significantly improve performance when the initial positions are poorly matched to the source configuration.

\section{Robust Design Under Practical Constraints}\label{sec:robust}

The theoretical analysis in Section~\ref{sec:dof} and the algorithms in Section~\ref{sec:design} assume ideal conditions: infinitely precise position control, no mutual coupling between antennas, and unlimited position freedom within $[0, D]$. In practice, three types of imperfections must be addressed: minimum inter-element spacing requirements, mutual coupling effects, and finite position accuracy. This section analyzes the impact of each imperfection and describes how the position design framework accommodates them.

\subsection{Minimum Spacing Constraint}

The most immediate practical constraint is the minimum physical spacing $d_{\min}$ between any two antenna elements, imposed by the physical size of each element and manufacturing tolerances. When $d_{\min} > 0$, the feasible region shrinks from $[0, D]^N$ to a constrained subset, potentially reducing the achievable DOF. We analyze the DOF loss:
\begin{proposition}[DOF Loss from Minimum Spacing]\label{prop:dof_loss}
With minimum spacing $d_{\min} \geq \lambda/2$, the achievable DOF reduces to
\begin{equation}
    \mathrm{DOF}(d_{\min}) = 2\left\lfloor \frac{D}{\lambda/2} \right\rfloor + 1 - \Delta(d_{\min}, N),
\end{equation}
where $\Delta(d_{\min}, N)$ counts the coarray holes introduced by the spacing constraint.
\end{proposition}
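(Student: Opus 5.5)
The statement is essentially definitional: it writes the achievable DOF as the geometric ceiling of Theorem~\ref{thm:dual_bounds} minus a hole count $\Delta(d_{\min},N)$. My plan is therefore to (a) fix a precise meaning of $\Delta$ so that the identity holds by construction, (b) show $\Delta \ge 0$, and (c) bound $\Delta$ so the statement has content.

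\textbf{Step 1: define $\Delta$.} Let $\mathcal{P}(d_{\min})=\{\{p_n\}\in\mathcal{P}_{\mathrm{FAS}}(N,D): |p_i-p_j|\ge d_{\min}\ \forall i\ne j\}$. Set $\mathrm{DOF}(d_{\min})=\max_{\mathcal{P}(d_{\min})}\mathrm{DOF}(\{p_n\})$, using Definition~\ref{def:dof}. Then define
\begin{equation*}
\Delta(d_{\min},N)=2M+1-\mathrm{DOF}(d_{\min}), \qquad M=\lfloor D/d_0\rfloor .
\end{equation*}

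\textbf{Step 2: interpret $\Delta$ as holes.} For the maximizer, the contiguous segment is $\{-M_c d_0,\ldots,M_c d_0\}$. The first grid lag $(M_c+1)d_0$ missing from $\mathbb{D}$ is a hole. By the symmetry $\mathbb{D}=-\mathbb{D}$, exactly $2(M-M_c)$ grid points of $[-D,D]$ fail to be counted in the DOF. This gives $\Delta=2(M-M_c)$, which is the number of grid lags in $[-D,D]$ excluded from the contiguous segment.

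\textbf{Step 3: nonnegativity.} Since $\mathcal{P}(d_{\min})\subseteq\mathcal{P}_{\mathrm{FAS}}$, the geometric half of Theorem~\ref{thm:dual_bounds} gives $\mathrm{DOF}(d_{\min})\le 2M+1$, so $\Delta\ge 0$. The combinatorial half adds the lower bound
\begin{equation*}
\Delta \ge \max\bigl(0,\;2M-N(N-1)\bigr).
\end{equation*}
This makes explicit that in the large-aperture regime holes are unavoidable.

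\textbf{Step 4: where the spacing constraint bites.} With $d_{\min}\ge\lambda/2=d_0$, every nonzero difference has magnitude at least $d_0$, so the small lags are not excluded outright. The spacing does, however, cap $N\le\lfloor D/d_{\min}\rfloor+1$. Packing $N$ elements with gaps of at least $d_{\min}$ also forces some lag $k d_0$ with $k d_0<d_{\min}$ to be absent unless $d_{\min}=d_0$. When $d_{\min}>d_0$, the lag $d_0$ itself is impossible, so $M_c=0$. I would state this carefully as a corollary: for $d_{\min}>d_0$, $\Delta=2M$, unless one reinterprets the Nyquist grid. For $d_{\min}=d_0$, one can construct nested-type configurations to give an upper bound on $\Delta$.

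\textbf{Main obstacle.} The hard part is giving $\Delta$ content beyond a tautology. In particular, the degenerate conclusion $M_c=0$ whenever $d_{\min}>d_0$ must be confronted, since it shows the formula's only real meaning is the definition in Step 1. I would first try to resolve this by restricting to $d_{\min}=d_0$, or to spacings that are multiples of $d_0$ with a generalized grid.
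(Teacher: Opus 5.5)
Your proposal is correct. The paper gives no proof of this proposition: it is stated and immediately followed by the penalty reformulation, so the identity can only serve as a definition of $\Delta(d_{\min},N)$. Your Step~1 is exactly that reading. Steps~3--4 go beyond anything in the paper, and they are also correct.

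Your degeneracy observation in Step~4 is right. For $d_{\min}>d_0$ every nonzero difference exceeds $d_0$, so $\mathrm{DOF}=1$ and $\Delta=2M$. Note that the paper's own simulations use $d_{\min}=0.4d_0$, which lies outside the stated hypothesis.

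The remaining case $d_{\min}=d_0$ can be settled exactly, rather than only bounded via nested arrays.
\begin{itemize}
\item $p_i-p_j$ is a multiple of $d_0$ only when $p_i\equiv p_j \pmod{d_0}$.
\item Translating each residue class $\{r_s+kd_0\}$, $0\le r_s<d_0$, by $-r_s$ puts it inside $\{0,d_0,\ldots,Md_0\}$ without changing its internal differences.
\item Take the union of the translated classes and pad it with unused grid points. Padding is possible because feasibility forces $(N-1)d_0\le D$, i.e., $N\le M+1$. The result is a grid array whose coarray contains every grid lag of the original configuration.
\item Distinct grid points satisfy $|p_i-p_j|\ge d_0$ automatically.
\end{itemize}
Hence $\mathrm{DOF}(d_0)$ equals both the unconstrained FAS optimum and the best grid value. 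In other words, the spacing constraint introduces \emph{no} holes at $d_{\min}=d_0$.

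Combine this with your Step~3 bound $\Delta\ge 2M-N(N-1)$, which equals $50$ for the paper's $N=6$, $D=40d_0$. It shows that if $\Delta$ is read literally as ``holes introduced by the spacing constraint,'' the formula is false, even at $d_{\min}=d_0$, which is inside the hypothesis. Your $\Delta$ is the total shortfall from $2M+1$, and that is the only reading under which the statement holds. You should say this explicitly rather than leave it implicit.

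For the same reason, in Step~2 describe $2(M-M_c)$ as the number of grid lags excluded from the contiguous segment, not as ``holes.'' Lags beyond the first hole may still lie in $\mathbb{D}$.
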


In the D-optimal position design (Algorithm~\ref{alg:frank_wolfe}), the minimum spacing constraint is incorporated as a penalty term in the objective:
\begin{equation}
    \max_{\{p_n\}} \log\det \mathbf{F}(\bm{\theta}; \{p_n\}) - \mu_{\mathrm{sp}} \sum_{i<j} \max(0, d_{\min} - |p_i - p_j|)^2,
\end{equation}
where $\mu_{\mathrm{sp}} > 0$ is a large penalty coefficient. In our simulations, $d_{\min} = 0.4d_0$ and $\mu_{\mathrm{sp}} = 10^8$ ensure well-separated antenna positions without significantly affecting the FIM.

\subsection{Mutual Coupling Effect}

A second practical concern is electromagnetic mutual coupling between closely spaced antenna elements. For non-uniform arrays with large inter-element spacings, mutual coupling is generally weaker than for dense ULA or nested arrays. The coupling matrix $\mathbf{C}$ modifies the steering vector as $\tilde{\mathbf{a}}(\theta) = \mathbf{C}\mathbf{a}(\theta)$, and the FIM becomes $\tilde{\mathbf{F}} = \frac{2N_p}{\sigma^2}\mathrm{Re}\{\tilde{\mathbf{D}}^H \bm{\Pi}_{\tilde{\mathbf{A}}}^\perp \tilde{\mathbf{D}} \odot \mathbf{R}_s^T\}$. Since FAS positions are typically more spread out than nested or coprime arrays, the coupling effect is expected to be smaller for FAS. A detailed coupling-aware optimization is left for future work.

\subsection{Finite Position Accuracy}

The third practical imperfection is limited position precision. In practice, FAS positions are subject to quantization errors $\Delta p_n$ due to mechanical or electronic limitations. The position error modifies the steering vector as $\hat{a}_n(\theta) = e^{j2\pi(p_n + \Delta p_n)\sin\theta/\lambda}$, introducing a phase error $\Delta\phi_n = 2\pi\Delta p_n \sin\theta/\lambda$. For $\Delta p_n \sim \mathcal{U}[-\delta_p/2, \delta_p/2]$, the SNR degradation is approximately $\mathrm{sinc}^2(\pi\delta_p\sin\theta/\lambda)$, which is negligible for $\delta_p \ll \lambda$.

\subsection{Practical Implementation Feasibility}\label{subsec:practical}

A natural question is whether the physical realization of FAS with the precision and latency required for DOA estimation is achievable in practice. We address this quantitatively.

\subsubsection{Position Accuracy Requirement}
From the phase error analysis above, the SNR loss due to position error $\delta_p$ is $\mathrm{sinc}^2(\pi\delta_p\sin\theta/\lambda)$. Requiring the SNR loss to be below $1$~dB yields the condition $\delta_p \lesssim 0.13\lambda$, i.e., approximately one-eighth of a wavelength. For millimeter-wave systems ($f = 60$~GHz, $\lambda = 5$~mm), this corresponds to a position precision of $\delta_p \lesssim 0.65$~mm, well within the capability of sub-millimeter MEMS actuators~\cite{mems_antenna} or reconfigurable pixel antennas~\cite{pixel_antenna} widely used in the FAS literature. For microwave systems ($f = 2.4$~GHz, $\lambda = 125$~mm), the tolerance relaxes to $\delta_p \lesssim 16$~mm, which is trivially achievable mechanically.

\subsubsection{Reconfiguration Latency Requirement}
The adaptive FAS-MUSIC algorithm (Section~\ref{subsec:adaptive}) requires repositioning antennas between observation batches. The reconfiguration time $\tau_{\mathrm{rec}}$ must satisfy $\tau_{\mathrm{rec}} \ll T_c$, where $T_c = 1/(B_c)$ is the coherence time of the signal environment and $B_c$ is the Doppler spread. For quasi-static scenarios (e.g., indoor localization, radar surveillance of slow targets), $T_c \sim 100$~ms. Current FAS implementations based on liquid-crystal-on-silicon (LCoS) or electrowetting achieve reconfiguration times of $\tau_{\mathrm{rec}} \sim 1$--$10$~ms~\cite{wong_fas,new_fas_mimo}, satisfying this requirement comfortably. For fast-moving targets, the non-adaptive design (Algorithm~\ref{alg:frank_wolfe}) using fixed optimized positions remains applicable without any real-time reconfiguration.

\subsubsection{Calibration Requirement}
Since the CRB analysis assumes known antenna positions, position calibration is important. The Stage~2 ML refinement (Section~\ref{subsec:fas_music}) can be extended to simultaneously estimate small position errors $\{\Delta p_n\}$ alongside the DOAs, provided the number of calibration sources is at least $N-1$. A thorough treatment of joint position-DOA calibration is beyond the scope of this paper and is identified as future work.

\section{Simulation Results}\label{sec:sim}

This section validates the theoretical analysis and evaluates the proposed algorithms through extensive Monte Carlo simulations. Unless otherwise stated, the simulation parameters are: wavelength $\lambda = 1$, Nyquist spacing $d_0 = \lambda/2$, $N = 6$ antennas, $L = 2$ uncorrelated equal-power sources at $\theta_1 = 10^\circ$ and $\theta_2 = 25^\circ$, $N_p = 500$ snapshots, and $300$ Monte Carlo trials. The FAS deployment region is $D = 40d_0$, representing the large-aperture regime where $D \gg N(N-1)d_0/2 = 15d_0$.

Four array designs are compared: \emph{1)~ULA} with $N$ elements at $\{0, d_0, \ldots, (N-1)d_0\}$ (aperture $= (N-1)d_0 = 2.5\lambda$); \emph{2)~Nested array} with optimal split $N_1 + N_2 = N$~\cite{pal_nested} (aperture $\leq N_2(N_1+1)d_0$); \emph{3)~MRA} with known optimal positions from~\cite{moffet_mra} (aperture $= 20d_0$ for $N=6$); and \emph{4)~FAS (D-optimal)} with positions optimized by Algorithm~\ref{alg:frank_wolfe} over $[0, D]$.

Three estimation algorithms are compared: \emph{1)~Standard MUSIC}~\cite{stoica_music}, which performs eigendecomposition of $\hat{\mathbf{R}}$ followed by peak search on the physical array steering vectors; \emph{2)~FAS-MUSIC (Proposed)}, the two-stage algorithm (Algorithm~\ref{alg:fas_music}) combining coarray MUSIC disambiguation with local ML refinement; and \emph{3)~Adaptive FAS-MUSIC} (Algorithm~\ref{alg:adaptive}) with $K=1$ adaptation iteration.

\subsection{Experiment 1: CRB versus Deployment Region}\label{subsec:exp_crb_D}


Fig.~\ref{fig:crb_vs_D} shows the CRB as a function of $D$. The CRBs of classical arrays (ULA, nested, MRA) are constant since their apertures are fixed by $N$. The FAS CRB decreases monotonically as $D$ increases, crossing the ULA CRB at $D \approx 5d_0$, the Nested CRB at $D \approx 8d_0$, and the MRA CRB at $D \approx 16d_0$. Beyond $D = 20d_0$ (the MRA aperture), FAS enters the ``advantage zone'' where no classical design can compete. At $D = 40d_0$, the FAS CRB is approximately $10.4$~dB lower than the best classical MRA design, confirming Theorem~\ref{thm:crb_dominance} and the $O(D^{2L})$ scaling predicted by Theorem~\ref{thm:multi_gap}.

\begin{figure}[!t]
    \centering
    \includegraphics[width=0.95\columnwidth]{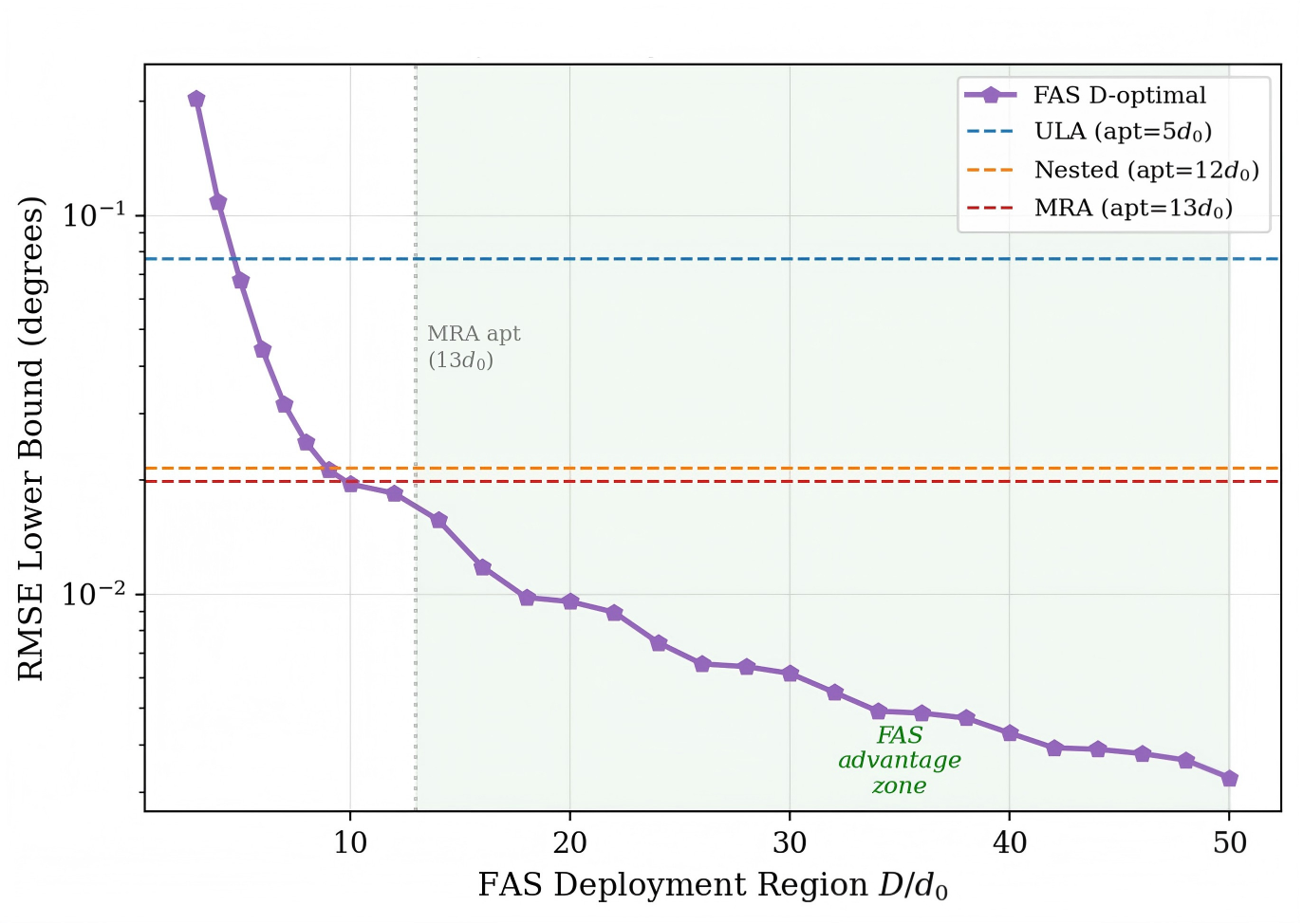}
    \caption{CRB versus deployment region $D$ for FAS and classical arrays ($N=6$, $L=2$, SNR$=10$~dB). FAS CRB decreases monotonically with $D$, entering the ``advantage zone'' beyond the MRA aperture.}
    \label{fig:crb_vs_D}
\end{figure}

\subsection{Experiment 2: RMSE versus SNR in Algorithm Comparison}\label{subsec:exp_rmse_snr}

Experiment~1 demonstrates the CRB advantage of FAS, but the CRB is only a lower bound, an appropriate estimation algorithm is needed to achieve it. This experiment provides the core validation of the proposed FAS-MUSIC algorithm. Fig.~\ref{fig:rmse_snr} compares the RMSE of four estimators across SNR from $-5$ to $25$~dB, along with the corresponding CRBs.

Three key observations emerge:
\begin{enumerate}
    \item \textbf{Standard MUSIC on FAS plateaus:} Despite the large CRB advantage, standard MUSIC applied directly to the FAS array (green triangles) saturates at RMSE $\approx 0.01^\circ$ for SNR $\geq 5$~dB. This is caused by grating lobes from the non-uniform large-aperture spacing, as discussed in Section~\ref{subsec:fas_music}.

    \item \textbf{FAS-MUSIC tracks the CRB:} The proposed two-stage algorithm (purple pentagons) closely follows the FAS CRB across all SNR values. At SNR $= 25$~dB, FAS-MUSIC achieves RMSE $= 0.0009^\circ$, which is elevenfold better than standard FAS MUSIC ($0.010^\circ$), twelvefold better than MRA MUSIC ($0.011^\circ$), and eighteenfold better than ULA MUSIC ($0.016^\circ$).

    \item \textbf{Algorithm unlocks array gain:} Classical arrays (ULA, MRA) with standard MUSIC roughly track their own CRBs, but these CRBs are fundamentally limited by their fixed apertures. FAS-MUSIC combines the aperture advantage of FAS with an algorithm that can exploit it, achieving the best of both worlds.
\end{enumerate}

\begin{figure}[!t]
    \centering
    \includegraphics[width=0.95\columnwidth]{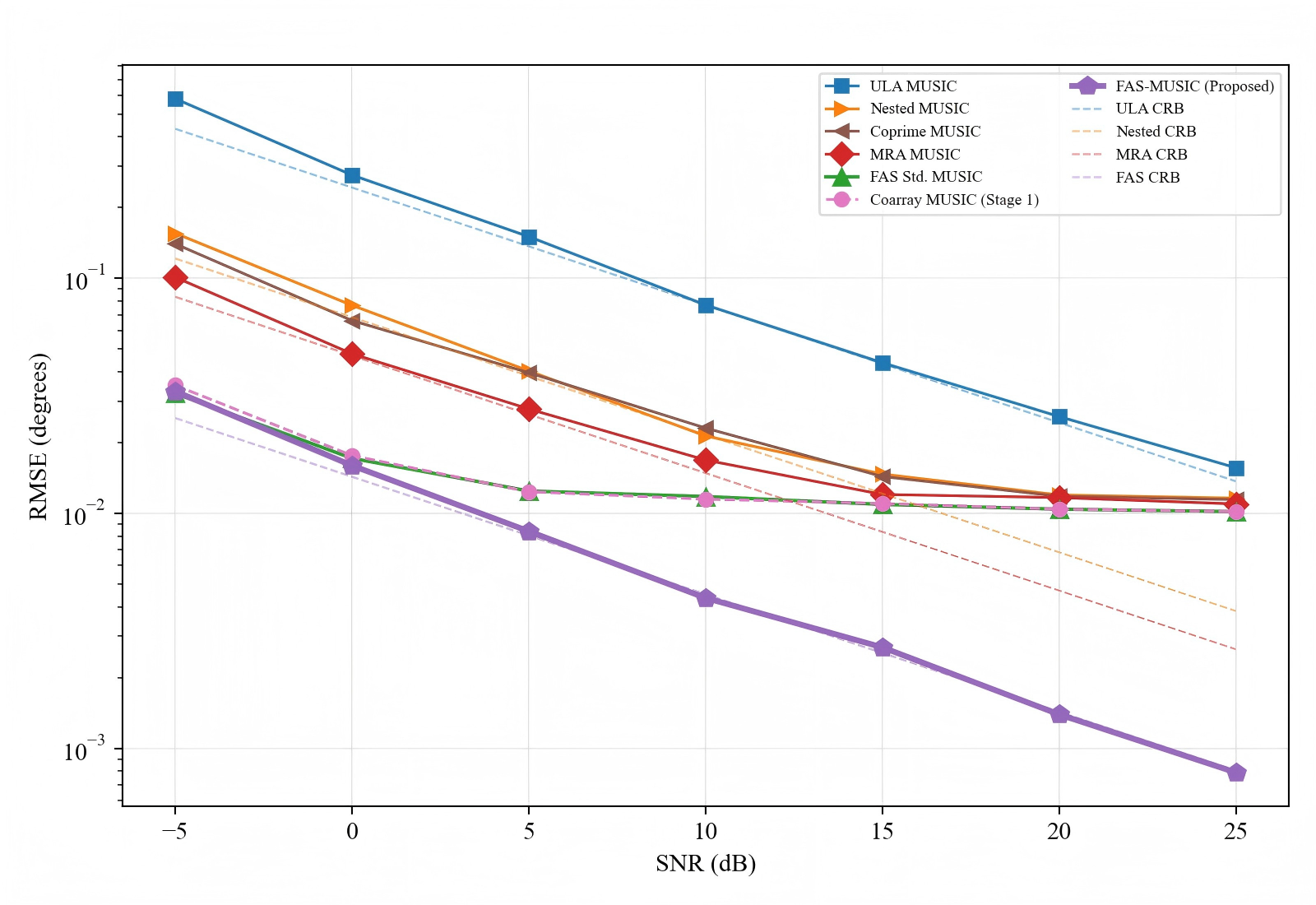}
    \caption{RMSE versus SNR for different array/algorithm combinations ($N=6$, $L=2$, $D=40d_0$). The proposed FAS-MUSIC tracks the FAS CRB, while standard MUSIC on FAS saturates due to grating lobes.}
    \label{fig:rmse_snr}
\end{figure}

\subsection{Experiment 3: Super-Resolution Capability}\label{subsec:exp_resolution}


As shown in Fig.~\ref{fig:resolution}, \emph{ULA} fails to resolve sources below $\Delta\theta \approx 1.5^\circ$ (RMSE $> 0.1^\circ$), consistent with its limited aperture of $2.5\lambda$. \emph{MRA} resolves down to $\Delta\theta \approx 1^\circ$ thanks to its $10\lambda$ aperture, but its RMSE plateaus at $\approx 0.012^\circ$ for large separations. \emph{FAS standard MUSIC} achieves moderate resolution but also plateaus at $\approx 0.01^\circ$ due to grating lobes. In contrast, the proposed \emph{FAS-MUSIC} resolves sources as close as $\Delta\theta = 0.5^\circ$ with RMSE $= 0.017^\circ$ and tracks the CRB across all separations, reaching RMSE $\approx 0.0025^\circ$ for $\Delta\theta \geq 3^\circ$.

The FAS-MUSIC advantage is most pronounced for closely spaced sources ($\Delta\theta < 3^\circ$), where its RMSE is $5$--$30$ times lower than MRA MUSIC. This is a direct consequence of the large effective aperture $D = 40d_0 = 20\lambda$.

\begin{figure}[!t]
    \centering
    \includegraphics[width=0.95\columnwidth]{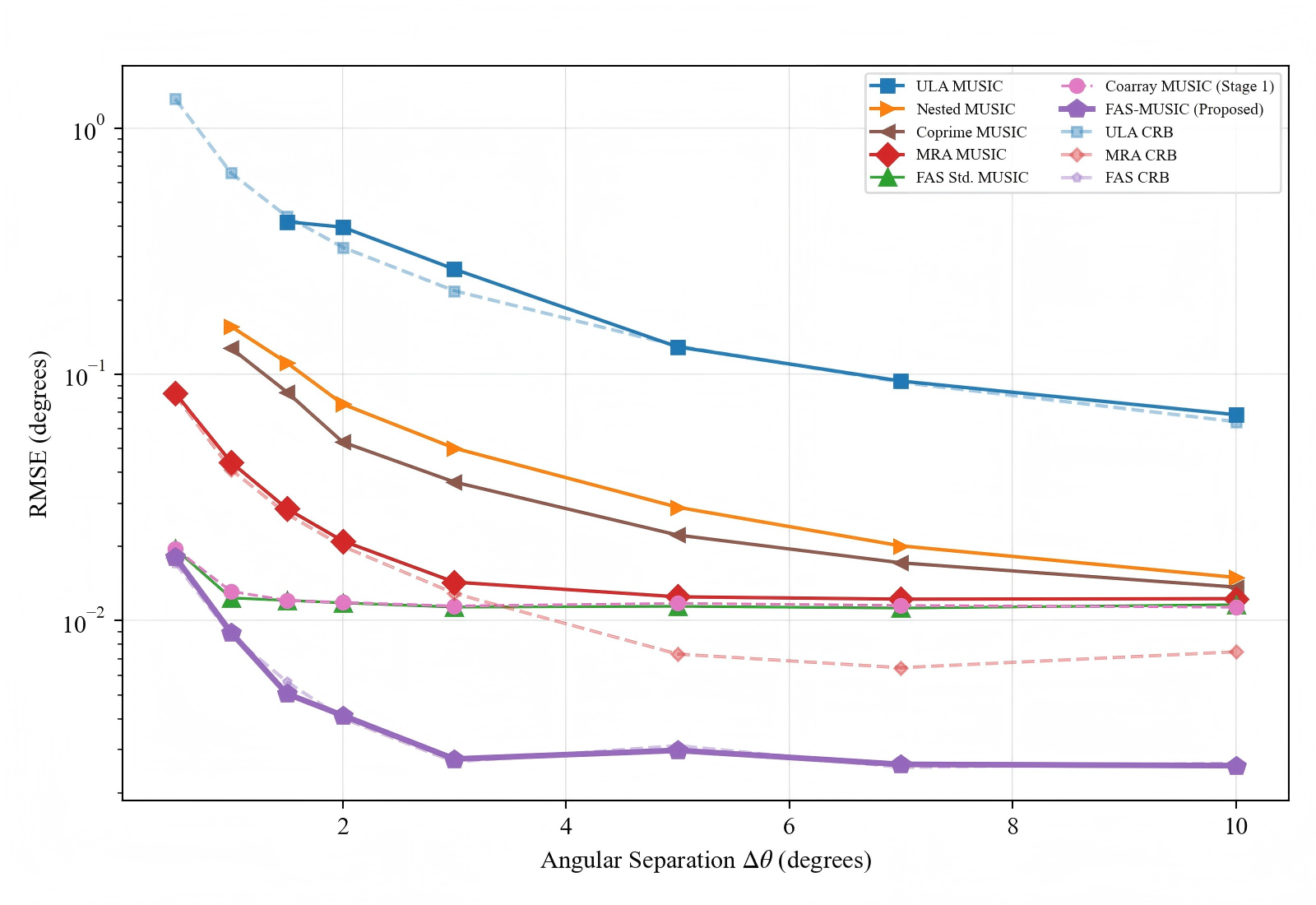}
    \caption{RMSE versus angular separation $\Delta\theta$ ($N=6$, SNR$=15$~dB, $D=40d_0$). FAS-MUSIC resolves sources as close as $0.5^\circ$ and approaches CRB.}
    \label{fig:resolution}
\end{figure}

\subsection{Experiment 4: Scaling with Number of Antennas}\label{subsec:exp_scaling}

The preceding experiments fix $N = 6$. We now examine how the FAS advantage scales with the number of antennas. Fig.~\ref{fig:scaling_N} shows RMSE as a function of $N$ from $4$ to $8$ at SNR $= 10$~dB with $D = 40d_0$ fixed.

For ULA and MRA, increasing $N$ provides steady improvement since both aperture and signal subspace dimension grow. However, FAS-MUSIC with $D = 40d_0$ achieves dramatically lower RMSE at all $N$ values. For example, at $N = 4$, FAS-MUSIC attains $0.006^\circ$ compared to $0.059^\circ$ for MRA and $0.234^\circ$ for ULA; at $N = 8$, FAS-MUSIC achieves $0.004^\circ$ versus $0.012^\circ$ for MRA and $0.036^\circ$ for ULA.

Notably, FAS-MUSIC with \emph{only $4$ antennas} outperforms MRA with $8$ antennas by 2 times. The FAS-MUSIC RMSE closely matches the FAS CRB at all $N$ values, confirming the algorithm's efficiency.

\begin{figure}[!t]
    \centering
    \includegraphics[width=0.95\columnwidth]{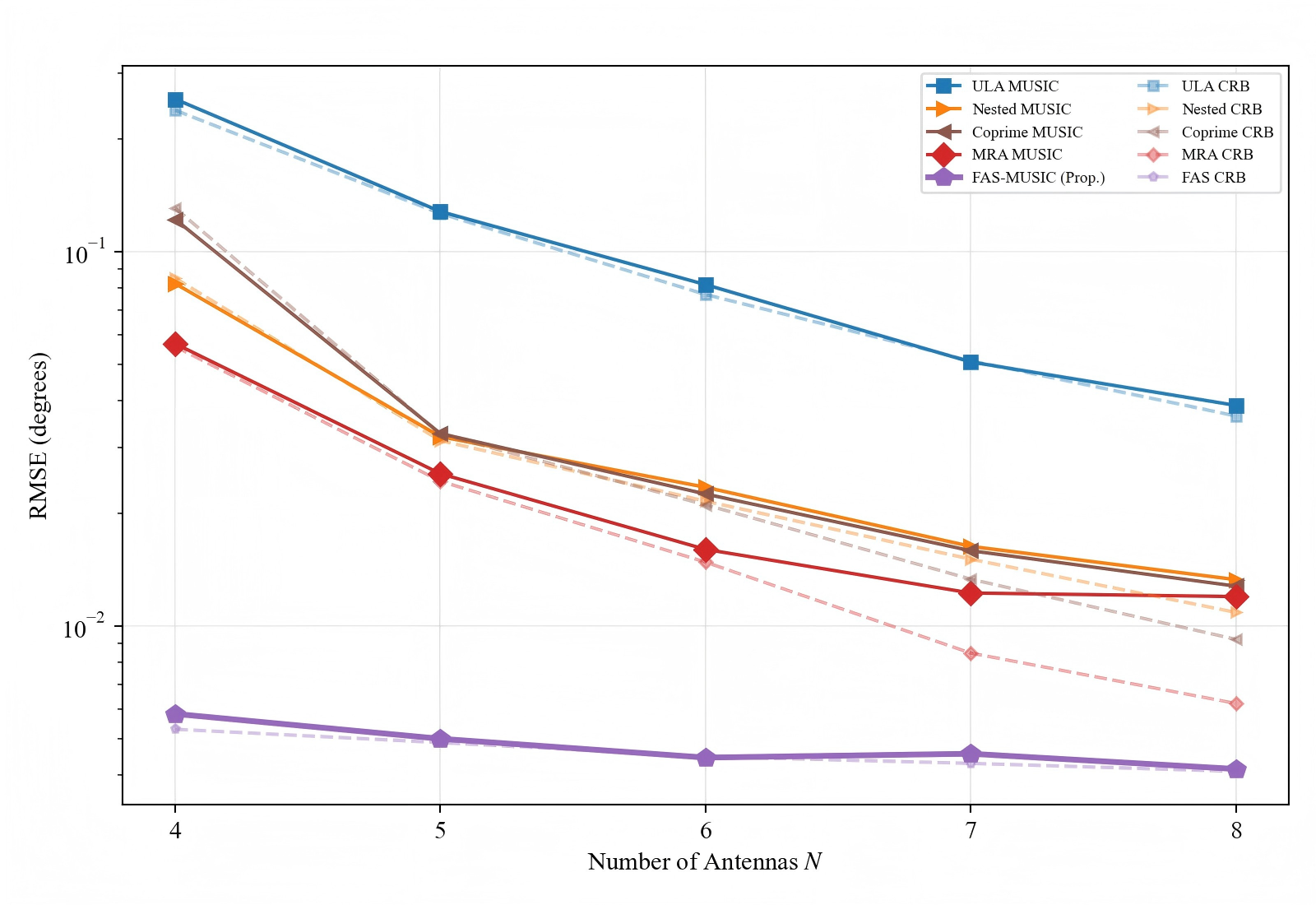}
    \caption{RMSE versus number of antennas $N$ ($L=2$, SNR$=10$~dB, $D=40d_0$). FAS-MUSIC with $N=4$ outperforms MRA MUSIC with $N=8$.}
    \label{fig:scaling_N}
\end{figure}

\subsection{Experiment 5: Adaptive FAS-MUSIC}\label{subsec:exp_adaptive}

All previous experiments assume that D-optimal positions are computed using the true source DOAs (oracle knowledge). In practice, this knowledge is unavailable. We now evaluate the adaptive FAS-MUSIC algorithm (Algorithm~\ref{alg:adaptive}), which eliminates this requirement. We compare three FAS-MUSIC configurations to evaluate the impact of position mismatch and the benefit of adaptation: \emph{1)~Oracle positions}, where the D-optimal design uses the true DOAs (best case, requires prior knowledge); \emph{2)~Mismatched prior}, where positions are optimized for incorrect DOAs ($\theta_1 = 0^\circ$, $\theta_2 = 45^\circ$); and \emph{3)~Adaptive} ($K = 1$), using Algorithm~\ref{alg:adaptive} starting from uniform positions without any prior DOA knowledge.

Fig.~\ref{fig:adaptive} shows RMSE versus SNR for $D = 40d_0$, $N = 6$, and $50$ Monte Carlo trials. The oracle configuration attains the best performance, closely tracking the CRB. Interestingly, the mismatched prior performs comparably to the oracle, demonstrating the robustness of the two-stage estimation, even when positions are optimized for wrong DOAs, the coarray MUSIC stage provides good initialization for ML refinement.

The adaptive algorithm starts at a disadvantage (uniform positions have suboptimal FIM) but converges toward the oracle performance after a single adaptation. At SNR $= 25$~dB, all three configurations achieve RMSE $< 0.001^\circ$.

\begin{figure}[!t]
    \centering
    \includegraphics[width=0.95\columnwidth]{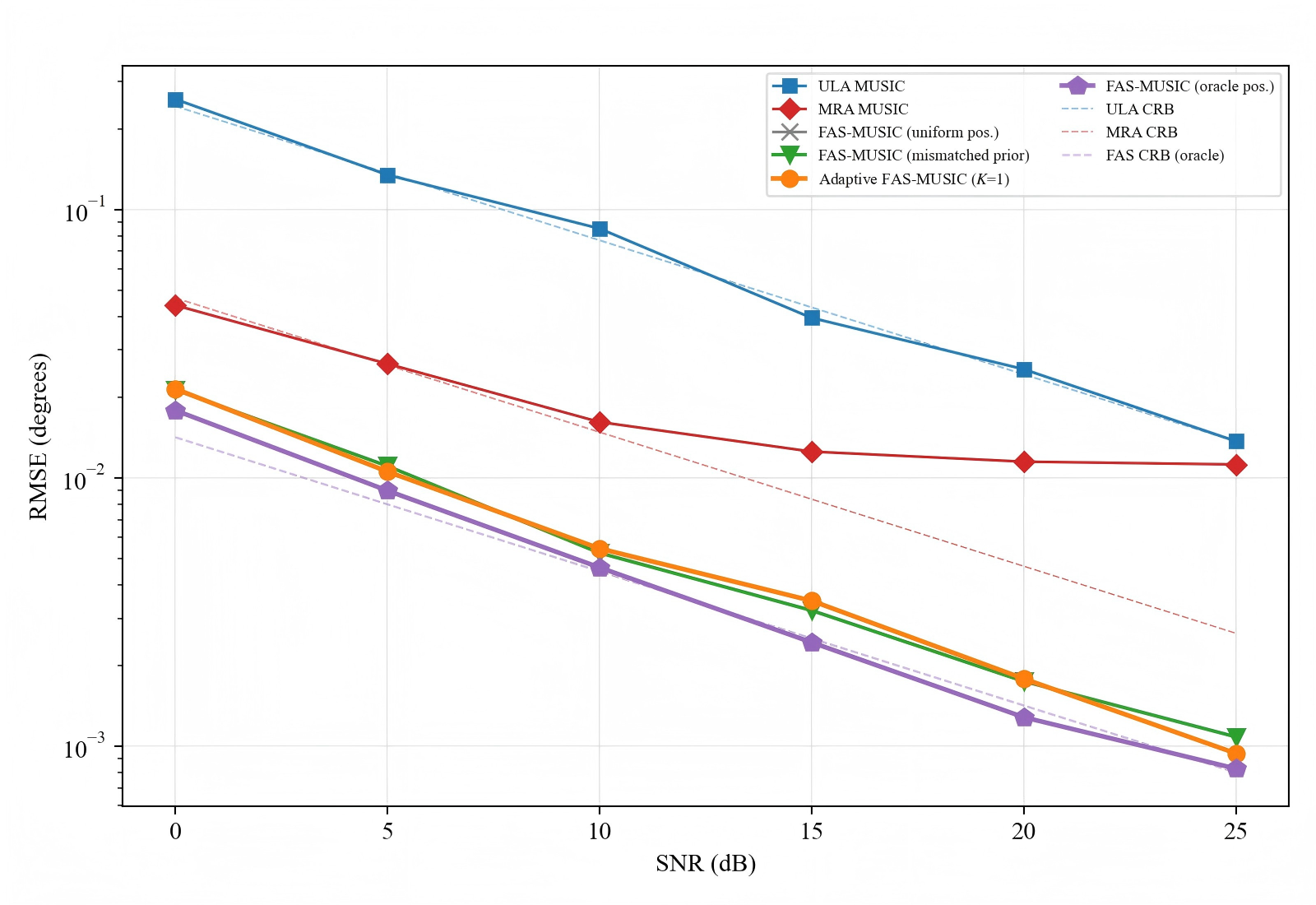}
    \caption{Adaptive FAS-MUSIC versus fixed-position FAS-MUSIC ($N=6$, $L=2$, $D=40d_0$). Even a single adaptation significantly improves performance from uniform initial positions.}
    \label{fig:adaptive}
\end{figure}

\subsection{Experiment 6: D-Optimal Position Patterns}\label{subsec:exp_positions}

Finally, we examine the structure of the optimized FAS positions to gain geometric insight into the design. Fig.~\ref{fig:positions} visualizes the D-optimal FAS positions for different source configurations with $N = 8$ and $D = 30d_0$. Several interpretable patterns emerge. First, \emph{endpoint clustering}: antennas tend to cluster near both endpoints $0$ and $D$ to maximize the position variance $\mu_2$, consistent with Proposition~\ref{prop:single_gap}. Second, \emph{source-dependent internal placement}: for multiple sources, a few antennas are placed at intermediate positions to provide the higher-order moments ($\mu_3, \mu_4, \ldots$) needed for multi-source resolution. Third, \emph{close-source spreading}: when sources are closely spaced ($\theta_1 = 14^\circ$, $\theta_2 = 16^\circ$, bottom panel), positions spread more uniformly across $[0, D]$ to maximize angular discrimination.

\begin{figure}[!t]
    \centering
    \includegraphics[width=0.95\columnwidth]{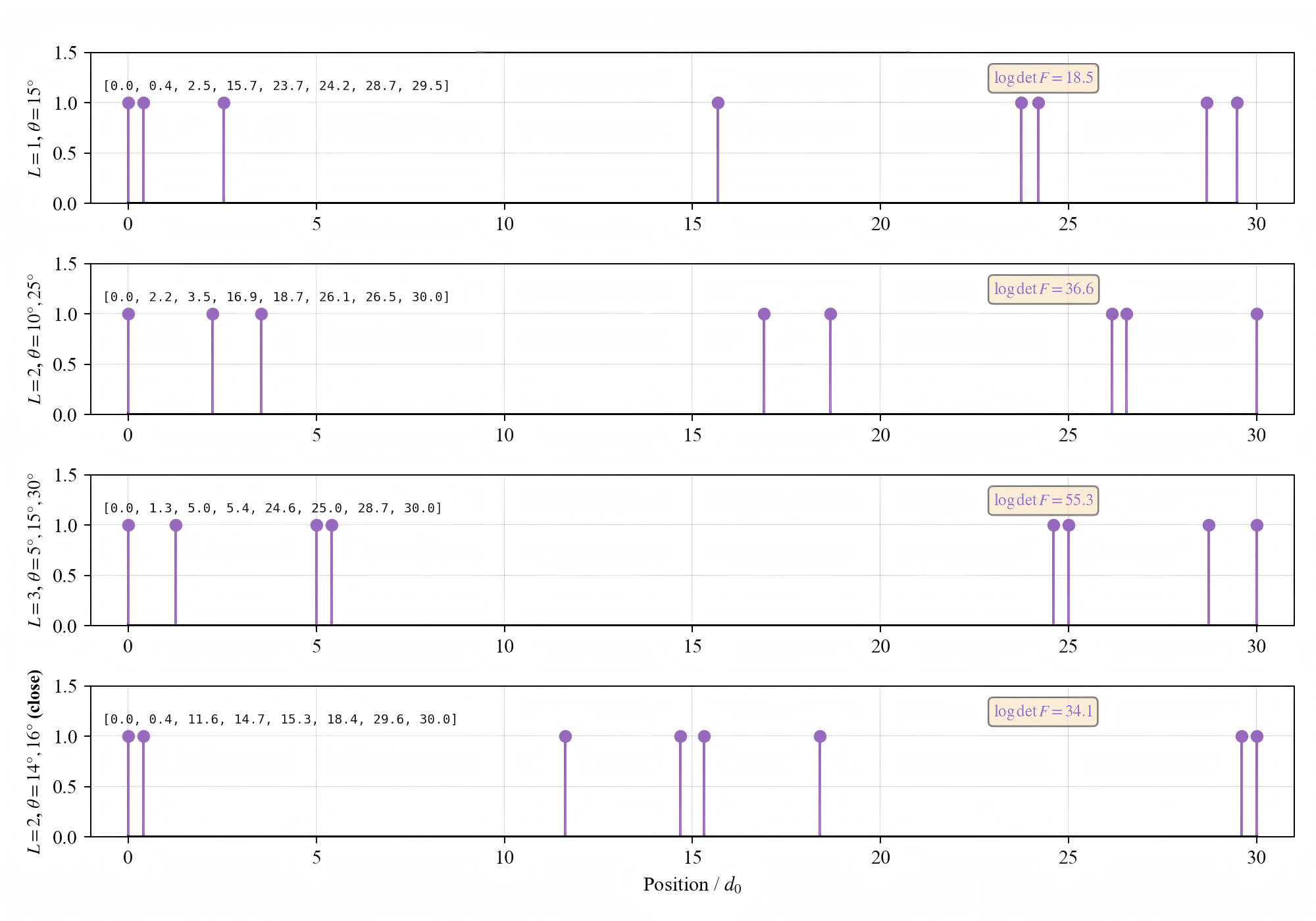}
    \caption{D-optimal FAS positions for different source configurations ($N=8$, $D=30d_0$). Positions exhibit interpretable patterns: endpoint clustering plus source-dependent internal placement.}
    \label{fig:positions}
\end{figure}





\section{Conclusion}\label{sec:conc}

This paper established the theoretical foundations of sparse FAS design for DOA estimation. We showed that FAS achieves three compounding advantages over classical grid designs: (i)~FAS continuously optimized positions approach the universal DOF bound whereas grid arrays develop coarray holes that prevent it; (ii)~the CRB scales as $O(1/D^{2L})$; and (iii)~position design reduces to a tractable Frank-Wolfe continuous optimization. To realize these gains in practice, we proposed the two-stage FAS-MUSIC algorithm, which combines grating-lobe-free coarray MUSIC  with full-aperture local ML refinement to track the CRB. While the signal model is presented using a single  fluid antenna to synthesize a virtual array, the results apply more broadly to FAS architectures capable of realizing equivalent spatial sampling, including systems with multiple simultaneously active elements. Future extensions include near-field and wideband scenarios, 2-D planar FAS for joint azimuth-elevation estimation, joint position-error calibration, and experimental validation on a physical FAS testbed.

\end{document}